\newtheorem{definition}{Definition}
\newtheorem{theorem}{Theorem}
\newtheorem*{theorem*}{Theorem}
\newtheorem{corollary}{Corollary}
\newtheorem*{corollary*}{Corollary}
\newcommand{\bup}[2]{b_\uparrow(#1,#2)}
\newcommand{\bdown}[2]{b_\downarrow(#1,#2)}
\newcommand{\wup}[2]{w_\uparrow(#1,#2)}
\newcommand{\wdown}[2]{w_\downarrow(#1,#2)}
\newcommand{\Bup}[2]{B_\uparrow(#1,#2)}
\newcommand{\Bdown}[2]{B_\downarrow(#1,#2)}
\newcommand{\Wup}[2]{W_\uparrow(#1,#2)}
\newcommand{\Wdown}[2]{W_\downarrow(#1,#2)}
\newcommand{\Kint}{K_{\text{int}}}
\newcommand{\Bupw}[1]{B^{\uparrow}_\omega(#1)}
\newcommand{\Bdownw}[1]{B^{\downarrow}_\omega(#1)}
\newcommand{\Wupw}[1]{W^{\uparrow}_\omega(#1)}
\newcommand{\Wdownw}[1]{W^{\downarrow}_\omega(#1)}
\newcommand{\G}[2]{G_{#1}(#2,n_0;\omega)}
\newcommand{\up}{\uparrow}
\newcommand{\down}{\downarrow}
\newcommand{\z}[1]{z_{#1}}
\newcommand{\rp}[1]{r_{#1,+}}
\newcommand{\rmi}[1]{r_{#1,-}}
\newcommand{\rpm}[1]{r_{#1,\pm}}
\title{Dimer model on the square lattice with interface}
\author{Meredith Shea}
\date{}
\begin{document}

\maketitle

\begin{abstract}
    In this exposition, we consider the dimer problem on an infinite square lattice with partially non-periodic edge weights, which we refer to as the square lattice with interface. In particular, we compute an exact integral form of the inverse Kasteleyn operator and study its asymptotics behavior in different regions of the lattice to gain a general understanding of the local statistics of the model.
\end{abstract}

\tableofcontents

\section{Introduction}
A \textit{dimer configuration}, or perfect matching, of a graph is a subset of the edges which covers every vertex exactly once. Given a graph, one assigns a positive edge weight to each edge. The probability of a random dimer configuration is proportional to the product of all the edge weights of the edges included in the configuration. A \textit{dimer model} studies the random dimer configurations of a graph. In this paper, we consider the dimer model for a partially non-periodic weighting of the bipartite square lattice. In particular, we study the asymptotic behavior of the inverse Kasteleyn operator for such a system.

For case of uniform edge weights, in other words each dimer is equally likely, solutions to the dimer model on regions of the square and hexagonal lattice are well understood . Moreover, in the case of periodic weightings, the dimer model has been studied for many cases of the square lattice \cite{Kenyon97, ChhitaEtAl15, ChhitaJohansson16}. There are also similar results regarding isoradial graphs \cite{Kenyon02, Kenyon02, Li17}, and the more general Rail Yard graphs \cite{BoutillierEtAl17}. In general, less is known about dimer models when the weightings are not periodic. In some instances, non-periodic weightings can still be expressed as a \textit{Schur process} and exact solutions are known. Some examples of this can be seen in \cite{OkounkovReshetikhin03, Borodin07, BufetovGorin18, BoutillierEtAl17}.

Of principle interest in the study of dimer models is understanding the asymptotic behavior of the model. Specifically, when we allow the model to become large and re-scale appropriately, the \textit{height function} (a function associated to the faces of the graph for a given dimer configuration) tends to a deterministic limit shape which describes the global behavior of the model \cite{KenyonEtAl06, KenyonOkounkov07}. The limit shape of a model illustrates the different phases that occur. In particular, the phases possible are deterministic (frozen), critical (liquid or rough), and non-critical (gas or smooth). Depending on the weights in a given dimer model, the model can exhibit a single phase or a mix of some or all the phases. 

A classic example of the limit shape of a dimer model is the arctic curve phenomenon of the uniform Aztec diamond \cite{CohnEtAl96, Johansson05}. For this model, an elliptic curve separates a deterministic (the outside) and critical region (the interior). In the critical region, the height fluctuations are known to converge to a \textit{Gaussian free field} \cite{Kenyon01, Sheffield07}. More complicated limit shapes of the Aztec diamond have been studied in \cite{ChhitaEtAl15, ChhitaJohansson16}.

Another way of determining the phase in a certain region of the model is to consider the decay of the \textit{correlation function} between two dimers. In particular, the correlation functions are deterministic in the deterministic regions, decay inversely with respect to the square distance in the critical regions, and decay exponentially with respect to the distance in the non-critical regions. While the correlation functions are local in nature, they give an alternate method for understanding the phases and limit shape of the dimer model. Also associated to the dimer model is a weighted adjacency matrix known as the \textit{Kasteleyn operator} (or matrix). The correlation functions of a model are proportional to the appropriate entries of the inverse Kasteleyn operator. Thus, one way to partially understand the limit shape of the model is to understand the inverse Kasteleyn operator. This will be the direction we will take in this exposition. 

In this paper we will consider the dimer model on the infinite square lattice with, what I refer to as, an interface weighting. This weighting is non-periodic in the horizontal direction, and is formally defined in section 4.1. The main result of this paper is as integral form of the inverse Kasteleyn operator for this dimer model, which is presented in Theorem 1. We then go on to study the asymptotic behavior of this operator in order to gain an understanding of the limit shape. 

The organization of the paper is as follows. In section 2, we will discuss, in greater detail, necessary preliminary information and the Kasteleyn operator. Then, in section 3, we will go over some well known solutions in the periodic case that will be important for understanding the lattice weights chosen in section 4. We will then introduce the interface lattice and compute an exact integral form of the inverse Kasteleyn operator for the lattice in section 4. Lastly, in section 5 we use asymptotic methods to describe the local behavior of the Kasteleyn operator in various regions of the lattice. 
\subsection{Acknowledgements} I would like to thank Nicolai Reshetikhin, Sylvie Corteel, and Matthew Nicoletti for invaluable conversations about dimer models. This work was also made possible by NSF grants DMS-1902226 and DMS-2000093.

\section{Preliminaries}
In this section, we aim to define the Kasteleyn operator on a simple, bipartite graph (possibly infinite). While the graph need not be bipartite to define the Kasteleyn operator, we will only consider such cases in this exposition. 

Let $G=(B,W,E)$ be any planar bipartite graph, where $E$ denotes the set of edges and we partition the vertices into the set of black, $B$, and white, $W$, vertices. We first equip our graph with edge weights. For any edge, $e \in E$, we assign a weight $wt(e) \in \mathbb{R}_{>0}$. Suppose the edge $e$ connects the vertex $v$ to the vertex $u$, then we use the notation $wt(e) = wt(u,v) = wt(v,u)$ to denote the edge weight interchangeably. There is no direction associated to the edge weight.

Next, we assign an orientation to our graph, which we call a Kasteleyn orientation. For the purpose of this exposition, it suffices to exhibit a Kasteleyn orientation on the square lattice. Figure \ref{fig:kasteleynorientation} gives an example of a Kasteleyn orientation on a portion of the square lattice. Moreover, this will be our choice of orientation throughout this paper. 
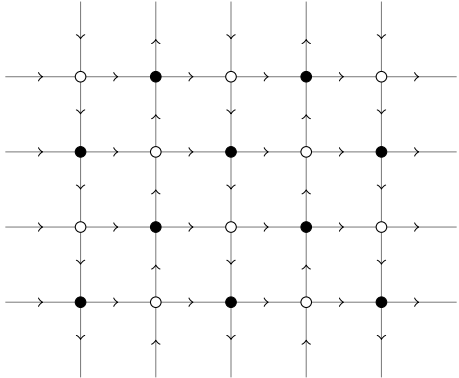
\begin{figure}[h] 
    \centering
    \begin{tikzpicture}
        \foreach \y in {0,...,3}
            {\draw[gray] (-3,\y) -- (3,\y);}
        \foreach \x in {-2,...,2}
            {\draw[gray] (\x,4) -- (\x,-1);}
        \foreach \x in {-2,0,2}
            \foreach \y in {1,3}
            {\filldraw[color=black,fill=white] (\x,\y) circle (2pt);}
        \foreach \x in {-1,1}
            \foreach \y in {2,0}
                {\filldraw[color=black,fill=white] (\x,\y) circle (2pt);}
        \foreach \x in {-1,1}
            \foreach \y in {1,3}
                {\filldraw[black] (\x,\y) circle (2pt);}
        \foreach \x in {-2,0,2}
            \foreach \y in {2,0}
                {\filldraw[black] (\x,\y) circle (2pt);}
        \foreach \x in {-3,...,2}
            \foreach \y in {0,...,3}
                {\path[tips, ->] (\x,\y) -- (\x+1/2,\y);}
        \foreach \x in {-2,0,2}
            \foreach \y in {4,...,0}
                {\path[tips,->] (\x,\y) -- (\x,\y-1/2);}
        \foreach \x in {-1,1}
            \foreach \y in {3,...,-1}
                {\path[tips,->] (\x,\y) -- (\x,\y+1/2);}
    \end{tikzpicture}
    \caption{\label{fig:kasteleynorientation} Example of a Kasteleyn orientation on the square lattice.}
\end{figure}
Given a Kasteleyn orientation of a graph $G$ we define the function $s: V \times V \to \mathbb{R}$ by,
\begin{equation}
    s(v,u) = 
    \begin{cases}
    0 & \text{if there is no edge between } v \text{ and } u \\
    1 & \text{if there is an edge between } v \text{ and } u \text{ orientated from } v \text{ to } u \\
    -1 & \text{if there is an edge between } v \text{ and } u \text{ orientated from } u \text{ to } v 
    \end{cases}
\end{equation}
Now we have all the information necessary to define the Kasteleyn operator.
\begin{definition}[Kasteleyn operator]
given a graph $G$, the Kasteleyn operator $K(G;wt,s):V \to V$ is defined by, 
\begin{equation}
    K(G;wt,s)v = \sum_{u\sim v} s(v,u)wt(v,u)u
\end{equation}
where the sum is taken over all neighboring vertices. 
\end{definition}
We will sometimes simplify the notation and use $K:=K(G;wt,s)$ when the choice of graph, edge weights, and orientation is clear. It is worth noting that two different choices of Kasteleyn orientation will result in Kasteleyn operators that are gauge equivalent. 

When $G$ is bipartite, it is useful to think of the Kasteleyn operator as acting on the space $B \oplus W$, where it has the structure
\begin{equation}
    K(G;wt,s) = \begin{pmatrix} 0 & -\tilde{K}(G;wt,s)^t \\ \tilde{K}(G;wt,s) & 0\end{pmatrix}
\end{equation}
where $\tilde{K}(G;wt,s): B \to W$. Again we will use the notation, $\tilde{K} := \tilde{K}(G;wt,s)$, when the choice of graph, edge weights, and orientation is clear. By choosing an identification of $B \simeq W$, for example a reference perfect matching, we can consider objects like the determinant of $\tilde{K}$. With this in mind, it suffices to consider just the operator $\tilde{K}$ when studying objects like the determinant and inverse of $K$.

The Kasteleyn operator is a central object in the study of dimer configurations (analogously tilings) on bipartite graphs. The following theorem is a central result in the literature, 
\begin{theorem*}[Kasteleyn \cite{Kasteleyn67}, Temperley and Fisher \cite{TemperleyFisher61}] \label{detkasteleyn}
For a bipartite graph with $wt(e) = 1$ for all $e \in E$ and any Kasteleyn orientation, the number of perfect matchings is equal to $\sqrt{\det K}$ (where we take the positive root). Equivalently, the number of perfect matchings is equal to $|\det \tilde{K}|$.
\end{theorem*}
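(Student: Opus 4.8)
\section*{Proof proposal}

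The plan is to evaluate $\det\tilde K$ directly from the Leibniz expansion and show that every nonzero term in it carries the \emph{same} sign. Once this is done, each nonzero term equals exactly $\pm1$ (all edge weights are $1$ and every value of $s$ is $\pm1$), so $|\det\tilde K|$ counts the nonzero terms; these will have been identified with the perfect matchings of $G$, and the identity $\sqrt{\det K}=|\det\tilde K|$ follows from the block form of $K$.

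First I would fix bijective labelings $B=\{b_1,\dots,b_n\}$ and $W=\{w_1,\dots,w_n\}$, so that $\tilde K$ is the $n\times n$ matrix with $\tilde K_{w_j,b_i}=s(w_j,b_i)\,wt(w_j,b_i)$ when $b_i\sim w_j$ and $0$ otherwise. In $\det\tilde K=\sum_{\pi\in S_n}\operatorname{sgn}(\pi)\prod_i\tilde K_{w_{\pi(i)},b_i}$, a term is nonzero precisely when $b_i\sim w_{\pi(i)}$ for every $i$, i.e.\ when $M_\pi:=\{\,b_iw_{\pi(i)}:i\,\}$ is a perfect matching; this gives a bijection between nonzero terms and perfect matchings, and under our hypotheses the term of $M=M_\pi$ equals $\varepsilon(M):=\operatorname{sgn}(\pi)\prod_i s(w_{\pi(i)},b_i)\in\{\pm1\}$.

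The heart of the matter is that $\varepsilon(M)$ is independent of $M$. I would prove this by comparing two matchings $M,M'$. Since both are perfect, every vertex has degree $0$ or $2$ in $M\triangle M'$, so $M\triangle M'$ is a disjoint union of simple cycles $C$, each alternating between $M$- and $M'$-edges and hence (bipartiteness) of even length $2k_C$. Two bookkeeping computations then apply: on the permutation side, $\pi_{M'}^{-1}\pi_M$ acts on the black-vertex indices of each cycle $C$ as a $k_C$-cycle and fixes everything else, so $\operatorname{sgn}(\pi_M)/\operatorname{sgn}(\pi_{M'})=\prod_C(-1)^{k_C-1}$; on the weight side, the edges of $M\cap M'$ cancel and the ratio of the two products of $s$-entries factors over the cycles, the contribution of $C$ being, after reading $C$ in a fixed clockwise traversal, $(-1)^{k_C-\mathrm{cw}(C)}$, where $\mathrm{cw}(C)$ is the number of edges of $C$ oriented clockwise. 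Multiplying, $\varepsilon(M)/\varepsilon(M')=\prod_C(-1)^{\mathrm{cw}(C)+1}$, so it remains only to show that $\mathrm{cw}(C)$ is \emph{odd} for every cycle $C$ occurring as a component of a symmetric difference of two perfect matchings.

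This last claim is Kasteleyn's lemma and is the real obstacle. By the defining property of a Kasteleyn orientation, every bounded face of $G$ has an odd number of clockwise edges on its boundary; summing this parity over the $f(C)$ faces enclosed by the simple cycle $C$, each of the $\iota(C)$ edges strictly inside $C$ is counted once for each of its two incident faces while every boundary edge matches the clockwise sense of $C$ itself, giving $\mathrm{cw}(C)\equiv f(C)-\iota(C)\pmod2$. Euler's formula, applied to the finite planar graph formed by $C$ together with everything it encloses, turns this into $\mathrm{cw}(C)\equiv 1-v(C)\pmod2$, where $v(C)$ is the number of vertices strictly inside $C$. Finally, every vertex of $C$ is matched by $M$ to a neighbor on $C$ (alternating cycle), so $M$ restricts to a perfect matching of the vertices strictly inside $C$; hence $v(C)$ is even and $\mathrm{cw}(C)$ is odd, completing the sign-constancy argument. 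To conclude, the block structure of $K$ gives $\det K=(\det\tilde K)^2$, so $\sqrt{\det K}=|\det\tilde K|$, which by the above is the number of perfect matchings.
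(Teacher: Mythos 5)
The paper does not prove this statement; it is quoted as a classical result with citations to Kasteleyn and to Temperley--Fisher, so there is no in-paper argument to compare against. Your proposal is the standard (and correct) proof of that classical theorem: Leibniz expansion identifying nonzero terms with perfect matchings, sign-constancy via the cycle decomposition of $M\triangle M'$, Kasteleyn's ``clockwise-odd'' face condition combined with Euler's formula to show each alternating cycle has an odd number of clockwise edges (using that the enclosed vertices are evenly many because they are matched among themselves by planarity), and finally $\det K=(\det\tilde K)^2$ from the block structure. All the bookkeeping checks out, including the contribution $(-1)^{k_C-\mathrm{cw}(C)}$ from the $s$-factors along a cycle and the congruence $\mathrm{cw}(C)\equiv 1-v(C)\pmod 2$. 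The only implicit hypotheses worth making explicit are that the graph is finite and planar with $|B|=|W|$ (so that $\tilde K$ is square and Euler's formula applies), and that ``Kasteleyn orientation'' means precisely the clockwise-odd condition on bounded faces --- the paper only exhibits such an orientation on the square lattice rather than defining it, and your argument is consistent with that example.
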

For graphs with arbitrary edge weights, the Kasteleyn operator and its inverse also give us information on the probability that a set of edges are simultaneously covered in an arbitrary dimer model. This is detailed in the following theorem,
\begin{corollary*}[Kenyon \cite{Kenyon97}] \label{dimercorrelations}
Given a set of edges of a bipartite planar graph, $X = \{ e(w_1,b_1), \dots, e(w_k,b_k)\}$, the probability that all edges in $X$ are covered in a given dimer covering is,
\begin{equation}
    \left( \prod_{i=1}^k K(b_i,w_i) \right) \det \left( K^{-1}(w_i,b_j) \right)_{1 \leq i,j \leq k}
\end{equation}
where the coefficient $K(b,w)$ is defined by, $\tilde{K}b = \sum_w K(b,w)w$ (and likewise for the inverse coefficients).
\end{corollary*}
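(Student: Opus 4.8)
The strategy is to reduce the statement to a linear-algebraic identity relating a maximal minor of $\tilde K$ to the complementary minor of $\tilde K^{-1}$, using the Kasteleyn--Temperley--Fisher theorem stated above as the bridge between sums over matchings and determinants. Fix an identification $B\simeq W$ coming from a reference perfect matching, so that $\tilde K$ is a square matrix with rows indexed by $W$ and columns by $B$, and write $Z = \sum_M\prod_{e\in M}wt(e)$ for the partition function (if $X$ extends to no perfect matching then both sides of the claimed formula vanish, so assume otherwise). By the theorem above, together with the planarity-dependent fact that every perfect matching contributes to $\det\tilde K$ with one and the same sign, $Z = \varepsilon\det\tilde K$ for a global sign $\varepsilon\in\{\pm1\}$; and the probability to be computed is $\Pr[\,X\subseteq M\,] = Z^{-1}\sum_{M\supseteq X}\prod_{e\in M}wt(e)$.

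First I would restrict the sum to matchings containing $X$. Set $V(X)=\{w_1,\dots,w_k,b_1,\dots,b_k\}$ and let $G' = G\setminus V(X)$ be the graph obtained by deleting these vertices and all incident edges. A perfect matching $M$ of $G$ with $X\subseteq M$ is precisely the choice of $X$ together with a perfect matching $M'$ of $G'$, and the weight factorizes, $\prod_{e\in M}wt(e) = \bigl(\prod_{i=1}^k wt(b_i,w_i)\bigr)\prod_{e\in M'}wt(e)$. Hence
\[
\Pr[\,X\subseteq M\,] \;=\; \frac{1}{Z}\Bigl(\prod_{i=1}^{k}wt(b_i,w_i)\Bigr)\,Z', \qquad Z' := \sum_{M'}\prod_{e\in M'}wt(e),
\]
where $Z'$ is the partition function of $G'$.

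Next I would identify $Z'$ with a minor of $\tilde K$ and invoke Jacobi's identity. The Kasteleyn orientation of $G$ restricts to a Kasteleyn orientation of $G'$ (the face-parity condition survives deletion of vertices of a planar graph; on the square lattice it is immediate), and the associated matrix $\tilde K'$ is exactly the submatrix of $\tilde K$ in which the rows $w_1,\dots,w_k$ and the columns $b_1,\dots,b_k$ have been struck out; so $Z' = \varepsilon'\det\tilde K'$ for a sign $\varepsilon'$. Jacobi's complementary-minor identity states that for an invertible matrix $A$ and index sets $I,J$ with $|I|=|J|$, $\det A[I^c\mid J^c] = \delta\,\det A\cdot\det\bigl(A^{-1}[J\mid I]\bigr)$, where $A[P\mid Q]$ denotes the submatrix on rows $P$ and columns $Q$ and the sign $\delta=\pm1$ depends only on the positions of $I$ and $J$. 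Applying this with $A=\tilde K$, $I=\{w_1,\dots,w_k\}$ and $J=\{b_1,\dots,b_k\}$ rewrites $\det\tilde K' = \det\tilde K[I^c\mid J^c]$ as $\delta\,\det\tilde K\cdot\det\bigl(K^{-1}(w_i,b_j)\bigr)_{1\le i,j\le k}$.

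Assembling the three steps, $\det\tilde K$ cancels between $Z'$ and $Z$, leaving $\prod_i wt(b_i,w_i)$, the determinant $\det\bigl(K^{-1}(w_i,b_j)\bigr)_{i,j}$, and an overall sign $\varepsilon\varepsilon'\delta$. Since $K(b_i,w_i) = s(b_i,w_i)\,wt(b_i,w_i)$ by the definition of the Kasteleyn operator, the last step --- and the one I expect to be the genuine obstacle --- is the sign bookkeeping: one must verify $\bigl(\prod_{i=1}^k s(b_i,w_i)\bigr)\,\varepsilon\,\varepsilon'\,\delta = +1$, so that the result is exactly $\bigl(\prod_i K(b_i,w_i)\bigr)\det\bigl(K^{-1}(w_i,b_j)\bigr)_{i,j}$ with no stray factor. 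Planarity enters essentially here, since it is what makes $\varepsilon$ and $\varepsilon'$ well defined; the cleanest route is to compare the Kasteleyn signs of $G$ and $G'$ along a single reference matching and match the orientation signs along it against the permutation sign generated by the row and column deletions. Finally, because the lattices considered in this paper are infinite, I would prove the identity first for each member of a finite exhaustion of $G$ and then pass to the limit, using the convergence of the finite-volume inverse Kasteleyn matrices to $K^{-1}$ to carry the formula over to the infinite graph.
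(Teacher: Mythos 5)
The paper does not prove this corollary: it is imported from Kenyon's paper as a black box, so there is no in-house argument to compare against. Your outline is, in substance, the standard (Kenyon's original) proof --- restrict to matchings containing $X$, identify the restricted partition function with the complementary minor of $\tilde{K}$, and convert that minor via Jacobi's complementary-minor identity, with the genuine content concentrated in the sign bookkeeping --- and the closing remark about proving the identity on a finite exhaustion and passing to the limit is the right way to make it applicable to the infinite lattices used in this paper. So the architecture is correct and complete in outline.

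The one justification that would not survive scrutiny as written is the parenthetical claim that ``the face-parity condition survives deletion of vertices of a planar graph.'' It does not in general: deleting $w_1,\dots,w_k,b_1,\dots,b_k$ merges all faces incident to them into larger faces, and the clockwise-odd condition on a merged face is not automatic, so you cannot assert that the restriction of the orientation is Kasteleyn for $G'$ by inspection. Fortunately what you actually need is weaker and follows from $G$ rather than from $G'$: namely that all perfect matchings of $G'$ enter $\det \tilde{K}'$ with one and the same sign. To see this, use the extension map $M' \mapsto M' \cup X$ and the term-by-term factorization
\begin{equation*}
    \mathrm{term}_{\det \tilde{K}}(M' \cup X) \;=\; \tau \cdot \Bigl( \prod_{i=1}^{k} K(b_i,w_i) \Bigr) \cdot \mathrm{term}_{\det \tilde{K}'}(M'),
\end{equation*}
where $\tau = \pm 1$ depends only on the positions of the struck rows and columns and not on $M'$. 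Since the left side equals $\varepsilon\, wt(M' \cup X)$ for every $M'$ by the Kasteleyn property of $G$, the terms on the right are forced to have a common sign, which gives $\det \tilde{K}' = \varepsilon' Z'$; and reading the same identity once more yields $\varepsilon = \tau\, \varepsilon' \prod_i s(b_i,w_i)$, which (after checking that $\tau$ coincides with the $\delta$ of Jacobi's identity for your row/column conventions) is exactly the sign cancellation $\bigl(\prod_i s(b_i,w_i)\bigr)\varepsilon\,\varepsilon'\,\delta = +1$ you flagged as the remaining obstacle. With that substitution for the face-parity claim, the argument closes.
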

Given a set of two edges, we define their dimer correlation function to be the difference between their joint probability (of being covered) and the product of their individual probabilities (of being covered). If we consider the two edges to be $e(w_1,b_1)$ and $e(w_2,b_2)$ then, by the corollary above, the correlation function is proportional to $K^{-1}(w_1,b_2)K^{-1}(w_2,b_1)$. 

In general, as two edges become further away from each other in a model, we expect them to become less correlated. The components of the inverse Kasteleyn operator inform us on how quickly the correlation functions decay. All of these facts show us that one way to understand the local statistics of a dimer model is to understand the entries of the inverse Kasteleyn operator. This will be aim in sections 4 and 5 of this exposition. 
\section{Periodic weights and the Kasteleyn operator}
We will consider the Kasteleyn operator on the infinite square lattice with periodic weights and periodic boundary conditions. In these instance, it is useful to define the fundamental domain of the lattice. Utilizing the fundamental domain will allow us to understand local dimer statistics of the model without having to go through what are well-known, but often long, computations. 
\subsection{Uniform weights}
First we will consider the uniform weights, $wt(e) = 1$ for all $e \in E$. Figure \ref{fig:fundamentaldomainuniform} depicts the $2 \times 2$ fundamental domain of this lattice. The fundamental domain consists of two black vertices vertices, $\bup{n}{m}$ and $\bdown{n}{m}$, and two white vertices, $\wup{n}{m}$ and $\wdown{n}{m}$. The coordinates $(n,m)$ denote the particular fundamental domain the vertices belong to.
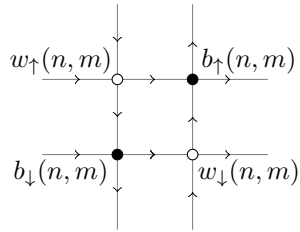
\begin{figure}[h] 
    \centering
    \begin{tikzpicture}
        \draw[gray] (0,2) -- (0,-1);
        \draw[gray] (1,2) -- (1,-1);
        \draw[gray] (-1,1) -- (2,1);
        \draw[gray] (-1,0) -- (2,0);
        \filldraw[black] (1,1) circle (2pt);
        \node at (1.75,1.25) {$b_\uparrow(n,m)$};
        \filldraw[black] (0,0) circle (2pt);
        \node at (-0.75,-0.25) {$b_\downarrow(n,m)$};
        \filldraw[color=black,fill=white] (1,0) circle (2pt);
        \node at (-0.75,1.25) {$w_\uparrow(n,m)$};
        \filldraw[color=black,fill=white] (0,1) circle (2pt);
        \node at (1.75,-0.25) {$w_\downarrow(n,m)$};
        \path[tips, ->] (0,1) -- (1/2,1);
        \path[tips, ->] (-1,1) -- (-1/2,1);
        \path[tips, ->] (1,1) -- (1+1/2,1);
        \path[tips, ->] (0,0) -- (1/2,0);
        \path[tips, ->] (0,0) -- (1/2,0);
        \path[tips, ->] (-1,0) -- (-1/2,0);
        \path[tips, ->] (1,0) -- (1+1/2,0);
        \path[tips, ->] (0,2) -- (0,1+1/2);
        \path[tips, ->] (0,1) -- (0,1/2);
        \path[tips, ->] (0,0) -- (0,-1/2);
        \path[tips, ->] (1,-1) -- (1,-1/2);
        \path[tips, ->] (1,0) -- (1,1/2);
        \path[tips, ->] (1,1) -- (1,1+1/2);
    \end{tikzpicture}
    \caption{\label{fig:fundamentaldomainuniform} The fundamental domain of the lattice with uniform edge weights.}
\end{figure}

Our goal is to compute the inverse Kasteleyn operator of this fundamental example. It will suffice to just compute the inverse of $\tilde{K}$. We start off by computing  how $\tilde{K}$ acts on the two black vertices of an arbitrary fundamental domain,
\begin{align}
    \tilde{K}\bup{n}{m} &= \wup{n+1}{m} - \wup{n}{m} + \wdown{n}{m+1} - \wdown{n}{m} \label{eq:uniformkbup}\\
    \tilde{K}\bdown{n}{m} &= -\wup{n}{m} + \wup{n}{m-1} + \wdown{n}{m} - \wdown{n-1}{m} \label{eq:uniformkbdown}
\end{align}
Since the lattice is translationally invariant in the $n$ and $m$ directions, we can Fourier transform equations \eqref{eq:uniformkbup} and \eqref{eq:uniformkbdown}. We will let $z$ and $\omega$ denote the Fourier variables in the $n$ and $m$ directions, respectively. We define the Fourier transform of the $b_i(n,m)$ and $w_i(n,m)$ functions as, 
\begin{align}
    B_i(z,\omega) &= \sum_{n,m} b_i(n,m)z^{-n}\omega^{-m} \\
    W_i(z,\omega) &= \sum_{n,m} w_i(n,m)z^{-n}\omega^{-m}
\end{align}
where $i = \uparrow, \; \downarrow$. Plugging these into equations \eqref{eq:uniformkbup} and \eqref{eq:uniformkbdown} gives, 
\begin{align}
    \tilde{K}_{z,\omega}\Bup{z}{\omega} &= (z-1)\Wup{z}{\omega} + (\omega-1)\Wdown{z}{\omega}\\
    \tilde{K}_{z,\omega}\Bdown{z}{\omega} &= (\omega^{-1}-1)\Wup{z}{\omega} + (1-z^{-1})\Wdown{\omega}{z}
\end{align}
Using the above, we can write a matrix form of $\tilde{K}_{z,w}$,
\begin{equation}
    \tilde{K}_{z,w} = \begin{pmatrix} z-1 & w^{-1}-1 \\ w-1 & 1-z^{-1}\end{pmatrix}
\end{equation}
Inverting the above matrix directly gives, 
\begin{equation}
    \left(\tilde{K}_{z,\omega}\right)^{-1} = \frac{1}{z+z^{-1}+w+w^{-1}-4}\begin{pmatrix}
    1-z^{-1} & 1 - w^{-1} \\ 1-w & z-1 
    \end{pmatrix} \label{eq:kunifinvFT}
\end{equation}
where $\left(\tilde{K}_{z,\omega}\right)^{-1}: W \to B$. Using the inverse Fourier transform will yield a double integral form for the coefficients of the inverse Kasteleyn. Since the lattice is translationally invariant, it suffices to fix one of the vertices at the fundamental domain $(0,0)$. Let's take a look at the integral form of $\tilde{K}^{-1}\left(\wup{0}{0},\bdown{n}{m}\right)$,
\begin{equation} \label{eq:kintunifasymp}
    \tilde{K}^{-1}\left(\wup{0}{0},\bdown{n}{m}\right) = \frac{1}{4 \pi^2} \int_{|z|=1} \int_{|w| = 1} \frac{(1-w)z^{n}w^{m}}{z+z^{-1}+w+w^{-1}-4}\frac{dw}{iw}\frac{dz}{iz}
\end{equation}
Choosing different vertices of the fundamental domain would simply change the numerator of the integrand according to the appropriate entry of equation \eqref{eq:kunifinvFT}.

Next, we consider the asymptotic behavior of $\tilde{K}^{-1}\left(w_\up(0,0),b_\down(n,m)\right)$ as $n$ and $m$ become large. We will first approach this using integral methods. Let $\omega = e^{i\theta}$ and rewrite the denominator in \eqref{eq:kintunifasymp} as, 
\begin{equation}
    (z+z^{-1}+w+w^{-1} - 4)z = (z - \lambda_-)(z+\lambda_+)
\end{equation}
where,
\begin{equation*}
    \lambda_{\pm} = \frac{1}{2}\left(4-2\cos \theta \pm \sqrt{(2\cos\theta-4)^2-4}\right)
\end{equation*}
We then use the residue theorem to handle the integral with respect to $z$, and we are left with an integral with respect to $\theta$. We use asymptotic techniques on this integral and we find that $\tilde{K}^{-1}\left(\wup{0}{0},\bdown{n}{m}\right)$ decays inversely with respect to the distance. We refer to this lattice as \textit{critical} because of this decay rate.  
\subsection{The spectral curve of the dimer model}
We can see from the above analysis of $\tilde{K}^{-1}\left(\wup{0}{0},\bdown{n}{m}\right)$ that the asymptotic behavior depends on whether the roots, $\lambda_{\pm}$, lie inside or outside the unit circle. Moreover, these roots appear in the denominator of equation \eqref{eq:kintunifasymp} which is, in fact, the determinant of $\tilde{K}_{z,w}$ after a choice of identification $B \simeq W$. We choose the identification which associates $b_{\up}(n,m) \simeq w_{\up}(n,m)$ and $b_{\down}(n,m) \simeq w_{\down}(n,m)$ for all $n$ and $m$. With this in mind we compute, 
\begin{equation}
    \det \tilde{K}_{z,w} = p(z,w) = z+z^{-1}+w+w^{-1}-4
\end{equation}
We call the function $p(z,w)$ the \textit{spectral curve} of the dimer model because it's roots tell us about the spectrum of the Kasteleyn operator and consequently the behavior of the correlation functions \cite{KenyonEtAl06}. We refer to the lattice model as being \textit{critical} (rough) if the entries of the inverse Kasteleyn operator decay inversely with respect to distance. We say that the lattice is \textit{non-critical} (smooth) if the entries of the inverse Kasteleyn operator decay exponentially with respect to distance. 

For periodic weightings and boundary conditions, we can determine whether the lattice model is critical by simply observing the roots of the spectral curve. If the roots of $p(z,w)$ lie on the unit torus, then the lattice is critical, otherwise the lattice is non-critical. In the next section we will look at a richer example to illustrate this fact more completely.  
\subsection{A periodic weighting}
Let's now consider the periodic weighting given by the fundamental domain in figure \ref{fig:fundamentaldomainperiodic}. Using the process described in section 3.1, we can write the $2\times 2$ matrix $\tilde{K}_{z,\omega}$,
\begin{equation}
    \tilde{K}_{z,\omega} = \begin{pmatrix}
    z-1 & b\omega^{-1}-a \\ a\omega-b & 1-z^{-1}
    \end{pmatrix}
\end{equation}
and compute the determinant using the same identification that was discussed in section 3.2,
\begin{equation}
    \det \tilde{K}_{z,\omega} = p(z,\omega) = -2 - 2ab + b^2 \omega^{-1} +a^2 \omega + z^{-1} + z \label{eq:pzw_ab}
\end{equation}
The asymptotic behavior of the entries of the inverse Kasteleyn depend on the roots of the above function. Moreover, the roots depend on the value of the positive parameters $a$ and $b$.
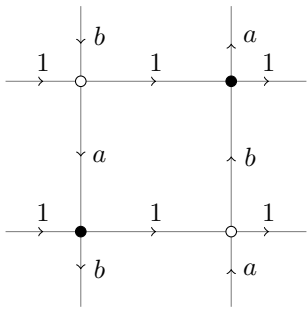
\begin{figure}[h] 
    \centering
    \begin{tikzpicture}
        \draw[gray] (0,3) -- (0,-1);
        \draw[gray] (2,3) -- (2,-1);
        \draw[gray] (-1,2) -- (3,2);
        \draw[gray] (-1,0) -- (3,0);
        \filldraw[black] (2,2) circle (2pt);
        \filldraw[black] (0,0) circle (2pt);
        \filldraw[color=black,fill=white] (2,0) circle (2pt);
        \filldraw[color=black,fill=white] (0,2) circle (2pt);
        \path[tips, ->] (-1,2) -- (-0.5,2);
        \path[tips, ->] (0,2) -- (1,2);
        \path[tips, ->] (2,2) -- (2.5,2);
        \path[tips, ->] (-1,0) -- (-0.5,0);
        \path[tips, ->] (0,0) -- (1,0);
        \path[tips, ->] (2,0) -- (2.5,0);
        \path[tips, ->] (0,3) -- (0,2.5);
        \path[tips, ->] (0,2) -- (0,1);
        \path[tips, ->] (0,0) -- (0,-0.5);
        \path[tips, ->] (2,-1) -- (2,-0.5);
        \path[tips, ->] (2,0) -- (2,1);
        \path[tips, ->] (2,2) -- (2,2.5);
        \node at (0.25,-0.5) {$b$};
        \node at (0.25,1) {$a$};
        \node at (0.25,2.6) {$b$};
        \node at (2.25,-0.5) {$a$};
        \node at (2.25,1) {$b$};
        \node at (2.25,2.6) {$a$};
        \node at (-0.5,0.25) {$1$};
        \node at (1,0.25) {$1$};
        \node at (2.5,0.25) {$1$};
        \node at (-0.5,2.25) {$1$};
        \node at (1,2.25) {$1$};
        \node at (2.5,2.25) {$1$};
    \end{tikzpicture}
    \caption{\label{fig:fundamentaldomainperiodic} A periodic weighting on the $2 \times 2$ fundamental domain. We label the domain vertices in the same manner as figure \ref{fig:fundamentaldomainuniform}.}
\end{figure}
If we fix the value of $z$ on the unit torus, $|p(z,\omega)|$ is minimal for $\omega = 1$. Applying this we are left with,  
\begin{equation}
    0 = (a-b)^2-2 + z+z^{-1}
\end{equation}
Thus, $p(z,\omega)$ has solutions on the unit torus if and only if $|a-b|<2$. So the lattice weights are critical when $|a-b|<2$ and non-critical when $|a-b| > 2$. The fact that the above lattice can be made critical or non-critical by changing the values of $a$ and $b$ is the motivation for the lattice we will describe in section 4.1. 
\section{Kasteleyn operator with interface}
In this section, we will introduce a new weighting on the infinite square lattice that is periodic only in one direction. We will refer to this lattice as the \textit{square lattice with interface} and to the corresponding Kasteleyn operator as the \textit{Kasteleyn operator with interface}. We will compute an integral form for the inverse Kasteleyn operator with interface. Later on, in section 5, we will consider the asymptotic behavior of this integral form to determine the local statistics of the lattice. 
\subsection{Square lattice with interface}
To understand the square lattice with interface, it will be useful to understand a coordinate system on the lattice. In the horizontal and vertical directions we will enumerate every other face of the lattice, such that each vertex is associated to exactly one enumerated face. The vertical direction will be known as the $m$-direction, while the horizontal direction will be known as the $n$-direction. Thus every enumerated face (and the four vertices associated to it) will have the coordinates $(n,m)$. These coordinates are depicted in figure \ref{fig:latticecoords}.
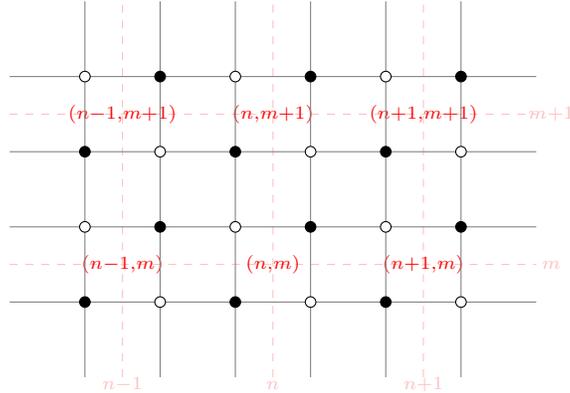
\begin{figure}[h] 
    \centering
    \begin{tikzpicture}
        \draw[pink,dashed] (-0.5,-1) -- (-0.5,4);
        \node[pink] at (-0.5,-1.1) {$\scriptstyle n$};
        \draw[pink,dashed] (-2.5,-1) -- (-2.5,4);
        \node[pink] at (-2.5,-1.1) {$\scriptstyle n-1$};
        \draw[pink,dashed] (1.5,-1) -- (1.5,4);
        \node[pink] at (1.5,-1.1) {$\scriptstyle n+1$};
        \draw[pink,dashed] (-4,0.5) -- (3,0.5);
        \node[pink] at (3.2,0.5) {$\scriptstyle m$};
        \draw[pink,dashed] (-4,2.5) -- (3,2.5);
        \node[pink] at (3.2,2.5) {$\scriptstyle m+1$};
        \node[red] at (-0.5,0.5) {$\scriptstyle (n,m)$};
        \node[red] at (-0.5,2.5) {$\scriptstyle (n,m+1)$};
        \node[red] at (-2.5,0.5) {$\scriptstyle (n-1,m)$};
        \node[red] at (1.5,0.5) {$\scriptstyle (n+1,m)$};
        \node[red] at (-2.5,2.5) {$\scriptstyle (n-1,m+1)$};
        \node[red] at (1.5,2.5) {$\scriptstyle (n+1,m+1)$};
        \foreach \x in {-3,...,2}
        {\draw[gray] (\x,-1) -- (\x,4);}
        \foreach \y in {0,...,3}
        {\draw[gray] (-4,\y) -- (3,\y);}
        \foreach \x in {-3,-1,1}
        {\filldraw[color=black,fill=white] (\x,3) circle (2pt);
        \filldraw[black] (\x+1,3) circle (2pt);}
        \foreach \x in {-3,-1,1}
        {\filldraw[black] (\x,2) circle (2pt);
        \filldraw[color=black,fill=white] (\x+1,2) circle (2pt);}
        \foreach \x in {-3,-1,1}
        {\filldraw[color=black,fill=white] (\x,1) circle (2pt);
        \filldraw[black] (\x+1,1) circle (2pt);}
        \foreach \x in {-3,-1,1}
        {\filldraw[black] (\x,0) circle (2pt);
        \filldraw[color=black,fill=white] (\x+1,0) circle (2pt);}
    \end{tikzpicture}
    \caption{\label{fig:latticecoords} Coordinates on the bipartite square lattice. Note that the face and the four adjacent vertices are all denoted by the given coordinate.}
\end{figure}
On the square lattice with interface, each enumerated face with $n \leq 0$ will have edge weights defined by figure \ref{fig:fundamentaldomainperiodic} and for enumerated faces with $n >0$ the edge weights will be defined by figure \ref{fig:funddomainn>0}. Since the change of weights occurs at $n = 0$, we call this line in the lattice \textit{the interface}. The depiction of the lattice near the interface is shown in figure \ref{fig:interfacelattice}.
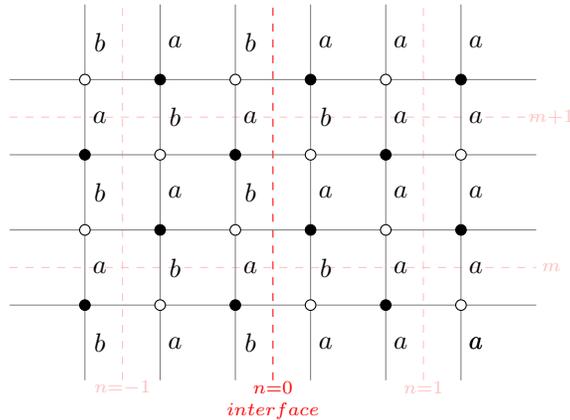
\begin{figure}[h] 
    \centering
    \begin{tikzpicture}
        \draw[red,dashed] (-0.5,-1) -- (-0.5,4);
        \node[red] at (-0.5,-1.1) {$\scriptstyle n=0$};
        \node[red] at (-0.5,-1.4) {$\scriptstyle interface$};
        \draw[pink,dashed] (-2.5,-1) -- (-2.5,4);
        \node[pink] at (-2.5,-1.1) {$\scriptstyle n=-1$};
        \draw[pink,dashed] (1.5,-1) -- (1.5,4);
        \node[pink] at (1.5,-1.1) {$\scriptstyle n=1$};
        \draw[pink,dashed] (-4,0.5) -- (3,0.5);
        \node[pink] at (3.2,0.5) {$\scriptstyle m$};
        \draw[pink,dashed] (-4,2.5) -- (3,2.5);
        \node[pink] at (3.2,2.5) {$\scriptstyle m+1$};
        \foreach \x in {-3,...,2}
        {\draw[gray] (\x,-1) -- (\x,4);}
        \foreach \y in {0,...,3}
        {\draw[gray] (-4,\y) -- (3,\y);}
        \foreach \x in {-3,-1,1}
        {\filldraw[color=black,fill=white] (\x,3) circle (2pt);
        \filldraw[black] (\x+1,3) circle (2pt);}
        \foreach \x in {-3,-1,1}
        {\filldraw[black] (\x,2) circle (2pt);
        \filldraw[color=black,fill=white] (\x+1,2) circle (2pt);}
        \foreach \x in {-3,-1,1}
        {\filldraw[color=black,fill=white] (\x,1) circle (2pt);
        \filldraw[black] (\x+1,1) circle (2pt);}
        \foreach \x in {-3,-1,1}
        {\filldraw[black] (\x,0) circle (2pt);
        \filldraw[color=black,fill=white] (\x+1,0) circle (2pt);}
        \node at (-2.8,3.5) {$\scriptsize b$};
        \node at (-2.8,2.5) {$\scriptsize a$};
        \node at (-2.8,1.5) {$\scriptsize b$};
        \node at (-2.8,0.5) {$\scriptsize a$};
        \node at (-2.8,-0.5) {$\scriptsize b$};
        \node at (-1.8,3.5) {$\scriptsize a$};
        \node at (-1.8,2.5) {$\scriptsize b$};
        \node at (-1.8,1.5) {$\scriptsize a$};
        \node at (-1.8,0.5) {$\scriptsize b$};
        \node at (-1.8,-0.5) {$\scriptsize a$};
        \node at (-0.8,3.5) {$\scriptsize b$};
        \node at (-0.8,2.5) {$\scriptsize a$};
        \node at (-0.8,1.5) {$\scriptsize b$};
        \node at (-0.8,0.5) {$\scriptsize a$};
        \node at (-0.8,-0.5) {$\scriptsize b$};
        \node at (0.2,3.5) {$\scriptsize a$};
        \node at (0.2,2.5) {$\scriptsize b$};
        \node at (0.2,1.5) {$\scriptsize a$};
        \node at (0.2,0.5) {$\scriptsize b$};
        \node at (0.2,-0.5) {$\scriptsize a$};
        \node at (1.2,3.5) {$\scriptsize a$};
        \node at (1.2,2.5) {$\scriptsize a$};
        \node at (1.2,1.5) {$\scriptsize a$};
        \node at (1.2,0.5) {$\scriptsize a$};
        \node at (1.2,-0.5) {$\scriptsize a$};
        \node at (2.2,-0.5) {$\scriptsize a$};
        \node at (2.2,3.5) {$\scriptsize a$};
        \node at (2.2,2.5) {$\scriptsize a$};
        \node at (2.2,1.5) {$\scriptsize a$};
        \node at (2.2,0.5) {$\scriptsize a$};
        \node at (2.2,-0.5) {$\scriptsize a$};
    \end{tikzpicture}
    \caption{\label{fig:interfacelattice} The planar, bipartite square lattice with interface at $n = 0$. All unlabeled edges have weight one. We also equipped this lattice with the Kasteleyn orientation depicted in figure \ref{fig:kasteleynorientation}.}
\end{figure}
\begin{figure}[h] 
    \centering
    \begin{tikzpicture}
        \draw[gray] (0,3) -- (0,-1);
        \draw[gray] (2,3) -- (2,-1);
        \draw[gray] (-1,2) -- (3,2);
        \draw[gray] (-1,0) -- (3,0);
        \filldraw[black] (2,2) circle (2pt);
        \filldraw[black] (0,0) circle (2pt);
        \filldraw[color=black,fill=white] (2,0) circle (2pt);
        \filldraw[color=black,fill=white] (0,2) circle (2pt);
        \path[tips, ->] (-1,2) -- (-0.5,2);
        \path[tips, ->] (0,2) -- (1,2);
        \path[tips, ->] (2,2) -- (2.5,2);
        \path[tips, ->] (-1,0) -- (-0.5,0);
        \path[tips, ->] (0,0) -- (1,0);
        \path[tips, ->] (2,0) -- (2.5,0);
        \path[tips, ->] (0,3) -- (0,2.5);
        \path[tips, ->] (0,2) -- (0,1);
        \path[tips, ->] (0,0) -- (0,-0.5);
        \path[tips, ->] (2,-1) -- (2,-0.5);
        \path[tips, ->] (2,0) -- (2,1);
        \path[tips, ->] (2,2) -- (2,2.5);
        \node at (0.25,-0.5) {$a$};
        \node at (0.25,1) {$a$};
        \node at (0.25,2.6) {$a$};
        \node at (2.25,-0.5) {$a$};
        \node at (2.25,1) {$a$};
        \node at (2.25,2.6) {$a$};
        \node at (-0.5,0.25) {$1$};
        \node at (1,0.25) {$1$};
        \node at (2.5,0.25) {$1$};
        \node at (-0.5,2.25) {$1$};
        \node at (1,2.25) {$1$};
        \node at (2.5,2.25) {$1$};
    \end{tikzpicture}
    \caption{\label{fig:funddomainn>0} The edge weights surrounding ennumerated faces when $n >0$.}
\end{figure}
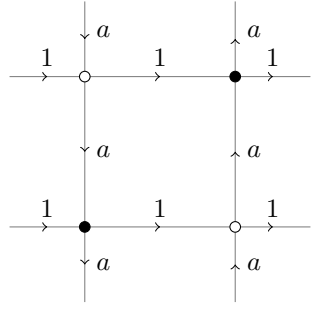
The aim of this 'interface' weighting is to choose an appropriate $a$ and $b$ such that the two halves of the lattice exhibit different fundamental behavior. In particular, we are interested in the case where $b-a > 2$, because one might expect half the weights to behave critically while the other half behaves non-critically. From this set up, we can explicitly define what we mean by the Kasteleyn operator with interface,
\begin{definition}[Kasteleyn operator with interface]
    Let $G$ be the square lattice with interface defined in figure \ref{fig:interfacelattice} equipped with the Kasteleyn orientation depicted in figure \ref{fig:kasteleynorientation}. We define the Kasteleyn operator with interface to be, 
    \begin{equation}
        \Kint := \tilde{K}(G)
    \end{equation}
\end{definition}
Again, since we are using the operator $\tilde{K}$, it is important to mention that we use the same identification of $B \simeq W$ as was described in section 3.2. Since we aim to compute the inverse of the Kasteleyn operator with interface, we need to comment on the boundary conditions of the lattice. In the $m$-direction we will assume periodic boundary conditions since the lattice in translationally invariant in this direction. In the $n$-directions, instead of applying typical boundary conditions we will apply conditions of convergence directly to the inverse Kasteleyn. These conditions will be seen in action in the proof of Theorem 1. 
\subsection{Inverse Kasteleyn with interface}
In this section we will state an explicit integral form of the entries of the inverse Kasteleyn operator with interface defined above. Before stating we first define the following functions with variable $\omega$,
\begin{align}
    z_1(\omega) &= a\omega-b \label{eq:z1def} \\
    z_2(\omega) &= a(\omega-1) \label{eq:z2def}
\end{align}
\begin{equation} \label{eq:rdef}
    r_{i,\pm}(\omega) = \frac{1}{2\omega}\left(2 \omega + z_i(\omega)^2 \pm z_i(\omega)\sqrt{z_i(\omega)^2-4\omega}\right)
\end{equation}
and the following vector valued function,
\begin{equation} \label{eq:vdef}
    \mathbf{v}_{i,\pm}(\omega) = \begin{pmatrix} \omega^{-1}z_i(\omega) \\ r_{i,\pm}(\omega)-1
    \end{pmatrix}
\end{equation}
where $r_{i,\pm}(\omega)$ and $\mathbf{v}_{i,\pm}(\omega)$ are defined for $i=1,2$. Although the above functions explicitly depend on $\omega$, we will simplify the notation where it does not create confusion. In these instances we will let, 
\begin{align*}
    z_i &:= z_i(\omega) \\
    r_{i,\pm} &:= r_{i,\pm}(\omega)\\
    \mathbf{v}_{i,\pm} &:= \mathbf{v}_{i,\pm}(\omega)
\end{align*}
With these functions in mind we can now define the functions central to the inverse Kasteleyn operator with interface, 
\begin{equation} \label{eq:g11g21sol1}
    \begin{pmatrix}
    G^>_{\up\up}(n,n_o;\omega) \\ G^>_{\down\up}(n,n_0;\omega)
    \end{pmatrix} = 
    \begin{cases}
    c_1(n_0;\omega) \mathbf{v}_{1,+}(\omega) r_{1,+}(\omega)^n & n \leq 0 \\
    c_2(n_0;\omega) \mathbf{v}_{2,+}(\omega) r_{2,+}(\omega)^n + c_3(n_0;\omega)\mathbf{v}_{2,-}(\omega) r_{2,-}(\omega)^n & 0 < n \leq n_0 \\
    c_4(n_0;\omega) \mathbf{v}_{2,-}(\omega) r_{2,-}(\omega)^n & n > n_0
    \end{cases}
\end{equation}
\begin{equation} \label{eq:g12g22sol1}
    \begin{pmatrix}
    G^>_{\up\down}(n,n_0;\omega) \\ G^>_{\down\down}(n,n_0;\omega)
    \end{pmatrix} = 
    \begin{cases}
    d_1(n_0;\omega) \mathbf{v}_{1,+}(\omega) r_{1,+}(\omega)^n & n \leq 0 \\
    d_2(n_0;\omega) \mathbf{v}_{2,+}(\omega) r_{2,+}(\omega)^n + d_3(n_0;\omega)\mathbf{v}_{2,-}(\omega) r_{2,-}(\omega)^n & 0 < n < n_0 \\
    d_4(n_0;\omega) \mathbf{v}_{2,-}(\omega) r_{2,-}(\omega)(\omega)^n & n \geq n_0
    \end{cases}
\end{equation}
\begin{equation} \label{eq:g11g21sol2}
    \begin{pmatrix}
    G^<_{\up\up}(n,n_0;\omega) \\ G^<_{\down\up}(n,n_0;\omega)
    \end{pmatrix} = 
    \begin{cases}
    c'_1(n_0;\omega) \mathbf{v}_{1,+}(\omega) r_{1,+}(\omega)^n & n \leq n_0 \\
    c'_2(n_0;\omega) \mathbf{v}_{1,+}(\omega) r_{1,+}(\omega)^n + c'_3(n_0;\omega)\mathbf{v}_{1,-}(\omega) r_{1,-}(\omega)^n & n_0 < n \leq 0 \\
    c'_4(n_0;\omega) \mathbf{v}_{2,-}(\omega) r_{2,-}(\omega)^n & n > 0
    \end{cases}
\end{equation}
\begin{equation} \label{eq:g12g22sol2}
    \begin{pmatrix}
    G^<_{\up\down}(n,n_0;\omega) \\ G^<_{\down\down}(n,n_0;\omega)
    \end{pmatrix} = 
    \begin{cases}
    d'_1(n_0;\omega) \mathbf{v}_{1,+}(\omega) r_{1,+}(\omega)^n & n < n_0 \\
    d'_2(n_0;\omega) \mathbf{v}_{1,+}(\omega) r_{1,+}(\omega)^n + d'_3(n_0;\omega)\mathbf{v}_{1,-}(\omega) r_{1,-}(\omega)^n & n_0 \leq n \leq 0 \\
    d'_4(n_0;\omega) \mathbf{v}_{2,-}(\omega) r_{2,-}(\omega)^n & n > 0
    \end{cases}
\end{equation}
where the coefficients $c_i(n_0;\omega)$, $c'_i(n_0;\omega)$, $d_i(n_0;\omega)$, and $d'_i(n_0;\omega)$ are defined in the Appendix A. The vertices $w_\up(n,m)$, $w_\down(n,m)$, $b_\up(n,m)$, and $b_\down(n,m)$ are the same vertices described in figure \ref{fig:fundamentaldomainuniform}.
\begin{theorem}[Inverse Kasteleyn operator with interface]
The inverse Kasteleyn operator has the form, 
\begin{equation}
    \Kint^{-1}\left(w_i(n_0,0),b_j(n,m)\right) = 
    \begin{cases} 
    \displaystyle\frac{1}{2\pi} \int_{|\omega|=1} G^>_{ij}(n,n_0;\omega)\omega^m\frac{d\omega}{i\omega} & n_0 > 0 \\[1.5em]
    \displaystyle\frac{1}{2\pi} \int_{|\omega|=1} G^<_{ij}(n,n_0;\omega)\omega^m\frac{d\omega}{i\omega} & n_0 < 0
    \end{cases}
\end{equation}
where $i = \up,\down$, $j=\up,\down$. Since the lattice is translationally invariant in the $m$-direction we fix $m_0 = 0$ without loss of generality.
\end{theorem}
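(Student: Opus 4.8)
\section*{Proof proposal}

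The plan is to reduce the operator identity $\Kint\,\Kint^{-1}=\mathrm{Id}$ to a one--parameter family of difference equations in $n$ by Fourier--transforming in the periodic direction, to solve each by a Green's--function construction for the associated difference operator, and to treat the change of weights at $n=0$ as an internal matching condition in place of a boundary condition. Concretely: since the weights depend only on $n$, $\Kint$ commutes with $m$--shifts, so for a fixed white vertex $w_i(n_0,0)$ I would write $m\mapsto\Kint^{-1}(w_i(n_0,0),b_j(n,m))$ as $\frac{1}{2\pi}\int_{|\omega|=1}G_{ij}(n,n_0;\omega)\,\omega^m\,\frac{d\omega}{i\omega}$ and work with the symbols $G_{ij}(\cdot\,,n_0;\omega)$. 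Writing out $\Kint b_\up(n,m)$ and $\Kint b_\down(n,m)$ from figures \ref{fig:interfacelattice} and \ref{fig:funddomainn>0} (the analogue of \eqref{eq:uniformkbup}--\eqref{eq:uniformkbdown}, with vertical--edge weights alternating $a,b$ for $n\le0$ and constantly $a$ for $n>0$) and transforming in $m$, the identity $\Kint\,\Kint^{-1}=\mathrm{Id}$ becomes, for each fixed $\omega$ and each white source type, a two--component linear recurrence in $n$ --- the two components being the two black types, i.e.\ the entries of the left--hand sides of \eqref{eq:g11g21sol1}--\eqref{eq:g12g22sol2} --- whose coefficients are constant on $n\le0$ and, with $b$ replaced by $a$, constant on $n>0$, with a Kronecker--delta source in $n$ at $n=n_0$ coming from $\mathrm{Id}$.

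Next I would describe the homogeneous solutions. On each side the homogeneous two--component recurrence has a two--dimensional solution space, spanned by the functions $\mathbf v_{i,\pm}(\omega)\,r_{i,\pm}(\omega)^n$ of \eqref{eq:g11g21sol1}--\eqref{eq:g12g22sol2}; the multipliers $r_{i,\pm}$ and vector parts $\mathbf v_{i,\pm}$ are obtained from the two roots in $z$ of the side's spectral polynomial --- \eqref{eq:pzw_ab} for $n\le0$ and its $b\mapsto a$ specialization for $n>0$, with $z_1,z_2$ as in \eqref{eq:z1def}--\eqref{eq:z2def} --- together with the corresponding null vectors of $\tilde{K}_{z,\omega}$. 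Because in the $n$--direction we impose convergence of $\Kint^{-1}$ rather than a boundary condition, on the far left only the branch decaying as $n\to-\infty$ is admissible --- the $r_{1,+}$ branch, whose multiplier stays off the unit circle in the regime $b-a>2$, so the selection is unambiguous --- and on the far right only the branch decaying as $n\to+\infty$, namely $r_{2,-}$, whose multiplier has modulus $\le1$ on $|\omega|=1$ with equality only at $\omega=1$. This forces precisely the piecewise shape \eqref{eq:g11g21sol1}--\eqref{eq:g12g22sol2}: a single free constant on each outer piece and the full two--term combination on the middle piece between the interface and $n_0$ (in region II when $n_0>0$, in region I when $n_0<0$), hence four unknowns in each of the cases $n_0>0$ and $n_0<0$.

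Finally I would pin down the coefficients. The four unknowns --- $c_i,d_i$ for $n_0>0$ and $c_i',d_i'$ for $n_0<0$ --- are determined by imposing the recurrence at the two rows straddling the interface $n=0$, which gives two scalar conditions expressing the correct joining of the two halves of the ansatz, and at the two rows straddling $n=n_0$, which gives two more scalar conditions encoding the jump produced by the $\delta$--source. Solving the resulting $4\times4$ linear system yields the coefficients recorded in Appendix A; one checks its determinant does not vanish on $|\omega|=1$ away from the finitely many branch points $z_i(\omega)^2=4\omega$, at which the integrand has at worst an integrable square--root singularity, so the $\omega$--integral converges. A short direct computation then verifies that the resulting $G_{ij}$ solves the full inhomogeneous recurrence for every $n$; since for all but finitely many $\omega$ the homogeneous recurrence has no nonzero solution decaying in both $n$--directions, this solution --- and hence the integral formula --- is the unique inverse.

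I expect the main obstacle to be the bookkeeping in the last two steps rather than any conceptual subtlety: passing from the two--component difference operator to the scalar data $r_{i,\pm},\mathbf v_{i,\pm}$ with all signs and branches consistent, assembling the interface-- and source--matching conditions in a single basis, and carrying the $4\times4$ elimination through to the closed forms of Appendix A. A secondary delicate point is to justify that the convergence--in--$n$ requirement still singles out a unique integrand even though the $n>0$ half is critical --- the two $n>0$ modes coalesce at $\omega=1$, where $r_{2,-}$ touches the unit circle and the decay selecting it degenerates --- which is also precisely where the asymptotic analysis of section 5 becomes most delicate.
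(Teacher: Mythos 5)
Your proposal follows essentially the same route as the paper's proof: Fourier transform in the periodic $m$-direction, reduction to a piecewise-constant two-component difference equation in $n$ with a Kronecker-delta source, diagonalization via the transfer-matrix eigendata $r_{i,\pm},\mathbf v_{i,\pm}$, selection of decaying branches as $n\to\pm\infty$ in place of boundary conditions, and determination of the four coefficients by matching at the interface and at the source (the paper imposes the equation at $n=0$, $n=n_0$, and $n=n_0-1$). Your added remarks on non-degeneracy of the $4\times4$ system and on the coalescence of the $n>0$ modes at $\omega=1$ go slightly beyond what the paper records explicitly, but the argument is the same.
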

\begin{proof}
First let us compute how $\Kint$ acts on the vertices $\bup{n}{m}$ and $\bdown{n}{m}$,
\begin{equation} \label{eq:kint1}
    \Kint \bup{n}{m} =
    \begin{cases}
    \wup{n+1}{m}-\wup{n}{m}+a\wdown{n}{m+1}-b\wdown{n}{m} & n \leq 0 \\
    \wup{n+1}{m}-\wup{n}{m}+a\wdown{n}{m+1}-a\wdown{n}{m} & n > 0
    \end{cases}
\end{equation}
\begin{equation} \label{eq:kint2}
    \Kint \bdown{n}{m} =
    \begin{cases}
    b\wup{n}{m-1}-a\wup{n}{m}+\wdown{n}{m}-\wdown{n-1}{m} & n \leq 0 \\
    a\wup{n}{m-1}-a\wup{n}{m}+\wdown{n}{m}-\wdown{n-1}{m} & n > 0
    \end{cases}
\end{equation}
Since we are applying periodic boundary conditions to the lattice in the $m$-direction, we can use the Fourier transform on the above. We will let $\omega$ to denote the Fourier variable and define,
\begin{align*}
    B_{\omega}^j(n) &= \sum_m b_j(n,m)\omega^{-m} \\
    W_{\omega}^j(n) &= \sum_m w_j(n,m)\omega^{-m} 
\end{align*}
for $j = \up,\down$. We can now rewrite equations \eqref{eq:kint1} and \eqref{eq:kint2} using the Fourier variable, 
\begin{equation}
    \Kint(n;\omega) \Bupw{n} = 
    \begin{cases}
    \Wupw{n+1} - \Wupw{n} + z_1(\omega)\Wdownw{n} & n \leq 0 \\
    \Wupw{n+1} - \Wupw{n} + z_2(\omega)\Wdownw{n} & n > 0
    \end{cases}
\end{equation}
\begin{equation}
    \Kint(n;\omega) \Bdownw{n} = 
    \begin{cases}
    -\omega z_1(\omega)\Wupw{n} + \Wdownw{n} - \Wdownw{n-1} & n \leq 0 \\
    -\omega z_2(\omega)\Wupw{n} + \Wdownw{n} - \Wdownw{n-1} & n > 0
    \end{cases}
\end{equation}
The functions $z_1(\omega)$ and $z_2(\omega)$ are defined in equations \eqref{eq:z1def} and \eqref{eq:z2def}. From the above equations we can write $\Kint(n;\omega)$ as a (piece-wise) difference operator acting on the vector $\begin{pmatrix}\Bupw{n} \\ \Bdownw{n} \end{pmatrix}$,
\begin{equation}
    \Kint(n;\omega) =
    \begin{cases}
    \begin{pmatrix} D_+ - 1 & -\omega z_1(\omega) \\ z_1(\omega) & 1-D_- \end{pmatrix} & n \leq 0 \\[1.5em]
    \begin{pmatrix} D_+ - 1 & -\omega z_2(\omega) \\ z_2(\omega) & 1-D_- \end{pmatrix} & n > 0 \\
    \end{cases}
\end{equation}
Where $D_\pm$ are the discrete operators defined by, $D_\pm f(n) = f(n\pm 1)$. We are now interested in finding the Green's function of the difference operator above. The Green's function will satisfy, 
\begin{equation} \label{eq:kgsystem}
    \Kint(n;\omega)\mathcal{G}(n,n_0;\omega) = \begin{pmatrix} \delta_{n,n_0} & 0 \\ 0 & \delta_{n,n_0} \end{pmatrix}
\end{equation}
where $\delta_{n,n_0}$ is the Kronecker delta. We will use the following notation for the components of $\mathcal{G}(n,n_0;\omega)$, 
\begin{equation}
    \mathcal{G}(n,n_0;\omega) = \begin{pmatrix}
    \G{\up\up}{n} & \G{\up\down}{n} \\
    \G{\down\up}{n} & \G{\down\down}{n}
    \end{pmatrix}
\end{equation}
In the case where $n \neq n_0$, equation \eqref{eq:kgsystem} can be epressed as two (piece-wise) linear system each pertaining to two components of $\mathcal{G}(n,n_0;\omega)$. These systems can be written as, 
\begin{equation}
    \begin{pmatrix} \G{\up j}{n+1} \\ \G{\down j}{n+1} \end{pmatrix} = M_n \begin{pmatrix} \G{\up j}{n} \\ \G{\down j}{n} \end{pmatrix}
\end{equation}
Where $j = \up, \down$ and 
\begin{equation}
    M_n = 
    \begin{cases}
    \begin{pmatrix} 1 & \omega^{-1}z_1(\omega) \\ -z_1(\omega) & 1-\omega^{-1}z_1(\omega)^2 \end{pmatrix} & n \leq 0 \\[1.5em]
    \begin{pmatrix} 1 & \omega^{-1}z_2(\omega) \\ -z_2(\omega) & 1-\omega^{-1}z_2(\omega)^2 \end{pmatrix} & n > 0
    \end{cases}
\end{equation}
 We are now able to solve for the Green's function of $\Kint(n;\omega)$ by simply stitching together the appropriate linear solutions. For any $|\omega| = 1$ the system above is non-degenerate, so we may solve for the eigenvalues and eigenvalues of the system in terms of $\omega$. The eigenvalues of $M_n$ are given by $r_{1,\pm}(\omega)$ when $n\leq 0$ and $r_{2,\pm}(\omega)$ when $n > 0$. Recall these functions are defined in equation \eqref{eq:rdef}. The eigenvectors of $M_n$ are given by $\mathbf{v}_{1,\pm}(\omega)$ when $n \leq 0$ and $\mathbf{v}_{2,\pm}(\omega)$ when $n > 0$. Again these vector valued functions are defined in equation \eqref{eq:vdef}.\\

Before we write the general solution, we must use our boundary condition in the $n$-direction. In this direction, we will require that the entries of the inverse Kasteleyn converge as $n$ becomes large in magnitude. It will suffice to define the following constraint on the Green's function as $n \to \pm \infty$, 
\begin{equation}
    \lim_{n\to\pm\infty} \G{ij}{n} = 0 \label{eq:greenslimit}
\end{equation}
For each $i = \up,\down$ and $j = \up,\down$. In order to write a solution that satisfies the above, we should understand the norm of the eigenvalues as $\omega$ varies along the unit circle. Figure \ref{fig:rootnorms} shows plots of the norms of the roots $\rpm{1}(\omega)$ given various values of $a$ and $b$. Note that when $a=b$, $\rpm{1}(\omega) = \rpm{2}(\omega)$.\\
\begin{figure}[ht]
\centering
\begin{subfigure}
    \centering
    \includegraphics[scale=0.5]{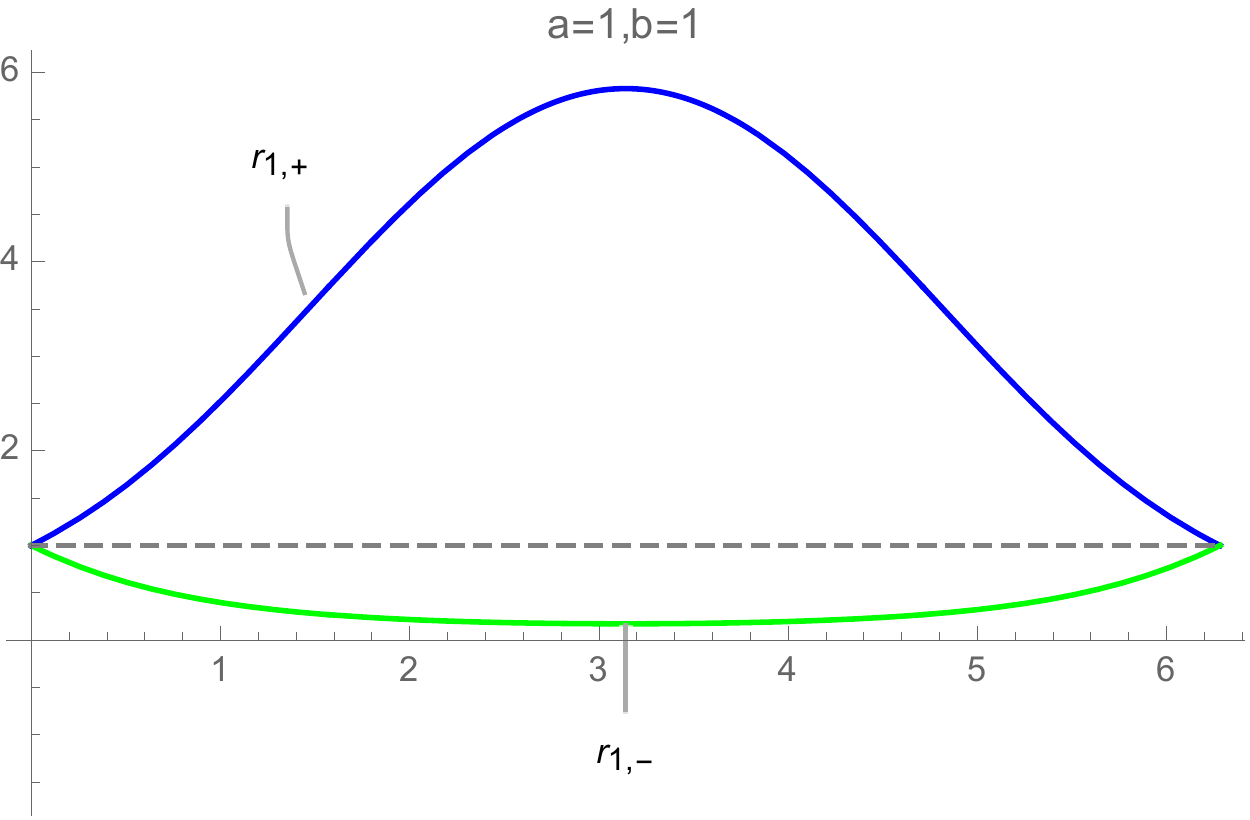}
\end{subfigure}
\begin{subfigure}
    \centering
    \includegraphics[scale=0.5]{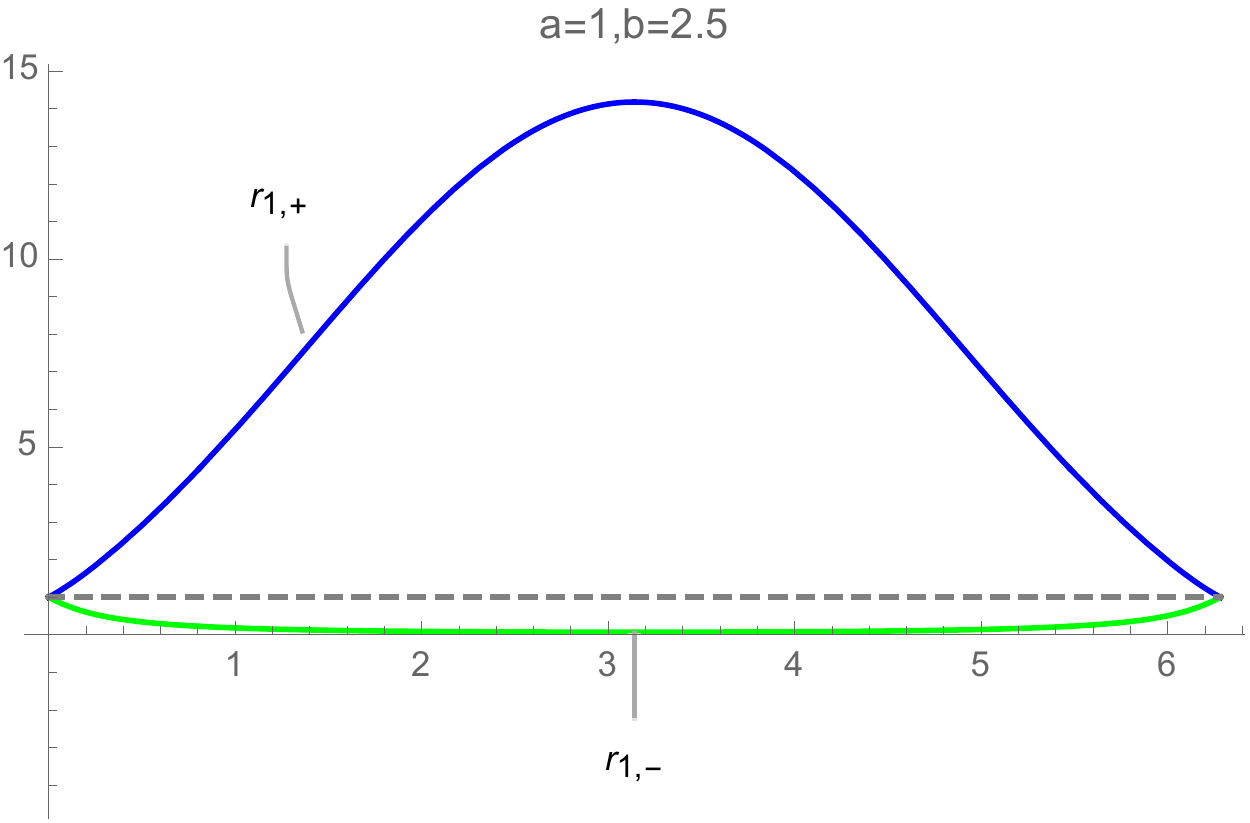}
\end{subfigure}\\[1em]
\begin{subfigure}
    \centering
    \includegraphics[scale=0.5]{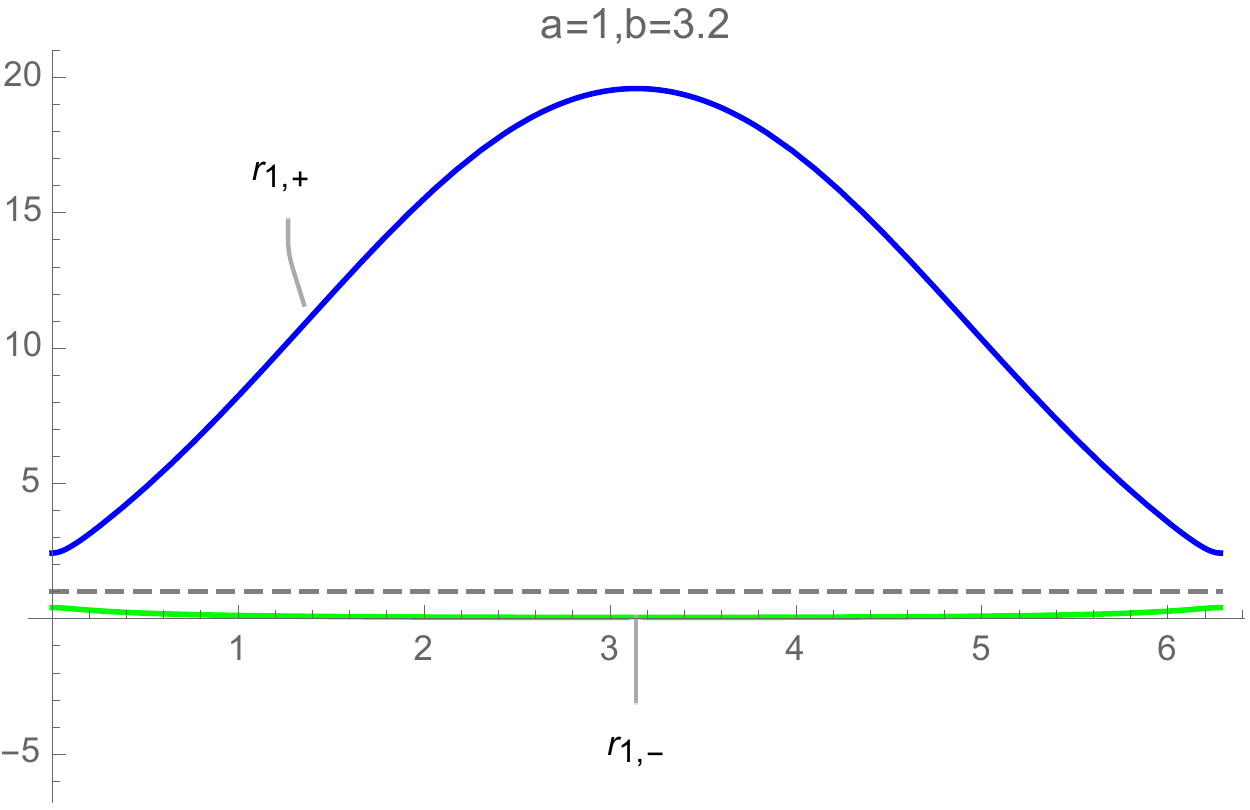}
\end{subfigure}
\begin{subfigure}
    \centering
    \includegraphics[scale=0.5]{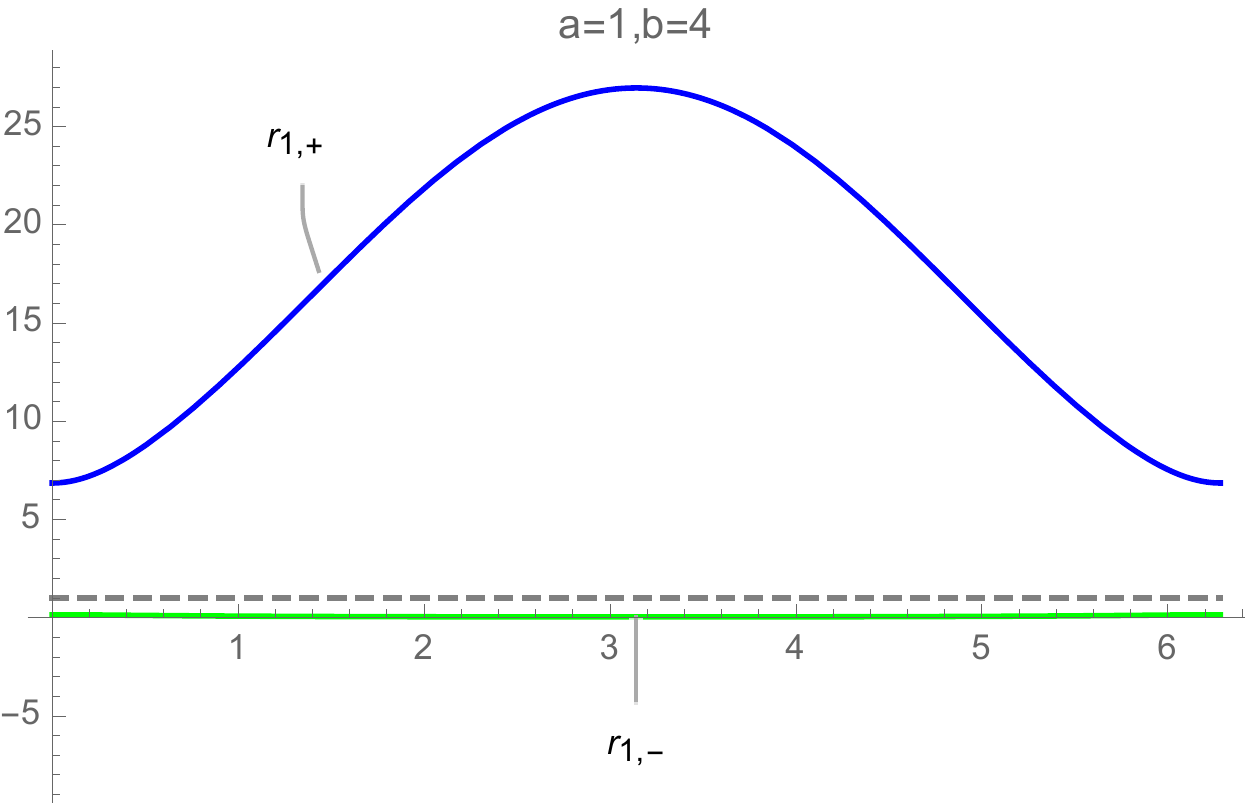}
\end{subfigure}
\caption{\label{fig:rootnorms} The above plots depict the norms of the eigenvalues as $\omega$ varies along the unit circle and for various values of $a$ and $b$. The $x$-axis indicates the argument of $\omega$. The gray dashed line depicts norm equal to $1$. Note that criticality of the weighting can also be seen in these graphs. The system is critical if $|r_{1,+}(\omega)| = |r_{1,-}(\omega)| = 1$ for some $|\omega|=1$. Thus, the the weighting is critical for the top two plots, and non-critical for the bottom two.}
\end{figure}
With all this in mind, we may write the general solution for the Green's function. Since the solution depends on whether $n_0 \leq 0$ or $n_0 > 0$, we will write these solutions are separate cases. We will denote the solutions for when $n_0 \leq 0$ and $n_0 > 0$ by $G_{ij}^<(n,n_0;\omega)$ and $G_{ij}^>(n,n_0;\omega)$, respectively. The general solutions are repeated below, 
\begin{equation*}
    \begin{pmatrix}
    G^>_{\up\up}(n,n_o;\omega) \\ G^>_{\down\up}(n,n_0;\omega)
    \end{pmatrix} = 
    \begin{cases}
    c_1(n_0;\omega) \mathbf{v}_{1,+}(\omega) r_{1,+}(\omega)^n & n \leq 0 \\
    c_2(n_0;\omega) \mathbf{v}_{2,+}(\omega) r_{2,+}(\omega)^n + c_3(n_0;\omega)\mathbf{v}_{2,-}(\omega) r_{2,-}(\omega)^n & 0 < n \leq n_0 \\
    c_4(n_0;\omega) \mathbf{v}_{2,-}(\omega) r_{2,-}(\omega)^n & n > n_0
    \end{cases}
\end{equation*}
\begin{equation*}
    \begin{pmatrix}
    G^>_{\up\down}(n,n_0;\omega) \\ G^>_{\down\down}(n,n_0;\omega)
    \end{pmatrix} = 
    \begin{cases}
    d_1(n_0;\omega) \mathbf{v}_{1,+}(\omega) r_{1,+}(\omega)^n & n \leq 0 \\
    d_2(n_0;\omega) \mathbf{v}_{2,+}(\omega) r_{2,+}(\omega)^n + d_3(n_0;\omega)\mathbf{v}_{2,-}(\omega) r_{2,-}(\omega)^n & 0 < n < n_0 \\
    d_4(n_0;\omega) \mathbf{v}_{2,-}(\omega) r_{2,-}(\omega)(\omega)^n & n \geq n_0
    \end{cases}
\end{equation*}
\begin{equation*}
    \begin{pmatrix}
    G^<_{\up\up}(n,n_0;\omega) \\ G^<_{\down\up}(n,n_0;\omega)
    \end{pmatrix} = 
    \begin{cases}
    c'_1(n_0;\omega) \mathbf{v}_{1,+}(\omega) r_{1,+}(\omega)^n & n \leq n_0 \\
    c'_2(n_0;\omega) \mathbf{v}_{1,+}(\omega) r_{1,+}(\omega)^n + c'_3(n_0;\omega)\mathbf{v}_{1,-}(\omega) r_{1,-}(\omega)^n & n_0 < n \leq 0 \\
    c'_4(n_0;\omega) \mathbf{v}_{2,-}(\omega) r_{2,-}(\omega)^n & n > 0
    \end{cases}
\end{equation*}
\begin{equation*}
    \begin{pmatrix}
    G^<_{\up\down}(n,n_0;\omega) \\ G^<_{\down\down}(n,n_0;\omega)
    \end{pmatrix} = 
    \begin{cases}
    d'_1(n_0;\omega) \mathbf{v}_{1,+}(\omega) r_{1,+}(\omega)^n & n < n_0 \\
    d'_2(n_0;\omega) \mathbf{v}_{1,+}(\omega) r_{1,+}(\omega)^n + d'_3(n_0;\omega)\mathbf{v}_{1,-}(\omega) r_{1,-}(\omega)^n & n_0 \leq n \leq 0 \\
    d'_4(n_0;\omega) \mathbf{v}_{2,-}(\omega) r_{2,-}(\omega)^n & n > 0
    \end{cases}
\end{equation*}
All that is left is to solve for the coefficients. To do this, we consider equation \eqref{eq:kgsystem} when $n = 0$, $n = n_0$, and $n = n_0-1$. All of the coefficients are explicit stated in Appendix \ref{appendixcoef}. \\

Lastly, to get an explicit formula for the entries of $\Kint^{-1}$ we simply use the inverse Fourier transform on the above.
\begin{equation}
    \Kint^{-1}\left(w_i(n_0,0),b_j(n,m)\right) = \frac{1}{2\pi} \int_{|w|=1} G^k_{ij}(n,n_0;\omega) \omega^m \frac{d\omega}{i\omega} \label{eq:invkasteleyn}
\end{equation}
for any $i=\up,\down$, $j=\up,\down$ and $k= >,<$. Note that $k$ depends on the value of $n_0$. We are allowed to fix the vertex $w_i$ at $m_0 = 0$ since the lattice is translationally invariant in the $m$-direction. 
\end{proof}

\section{Asymptotic behavior of the inverse Kasteleyn}
In this section, we will compute the asymptotic behavior of $\Kint^{-1}\left(w_i(n,m),b_j(n_0,0)\right)$ with the goal of understanding the limit shape behavior of the lattice. By determining the decay of the inverse Kasteleyn entries, we will learn whether certain regions of the lattice are critical or non-critical. In the critical case, we expect the decay of the inverse Kasteleyn to be inversely related to the distance, while in the non-critical regions we expect the decay to be exponential.

Throughout this analysis we will assume that the $m_0 = 0$ as the lattice is translationally invariant in the $m$-direction. We will also assume that $b-a > 2$ as we expect this scenario of weights to produce a lattice with both critical and non-critical regions. 
\subsection{Asymptotic behavior for large $m$ and fixed $n$ and $n_0$}
First we will consider the asymptotic behavior when $n$ and $n_0$ are close to the boundary and the vertical distance grows. Regardless of what half of the plane the two vertices are in, the inverse Kasteleyn component decays inversely in $m$. However, it is worth noting that when one of the vertices lies to the left of the interface, it will contribute a term that decays exponentially in the relevant coordinate, even though these values are fixed. We will elaborate on this more after the statement and proof of the corollary,
\begin{corollary}
For $n$ and $n_0$ fixed, the inverse Kasteleyn entries have the following asymptotic expansion,
\begin{equation} \label{eq:corollary1}
    \Kint^{-1}\left(w_i(n_0,0),b_j(n,m)\right) = \frac{1}{\pi m} \Im\left(\lim_{\theta \to 0^+} G^k_{ij}(n,n_0;e^{i\theta})\right) + \mathcal{O}(m^{-2})
\end{equation}
as $m \to \infty$. Where $i = \up,\down$, $j = \up, \down$, and $k = >,<$ depending on the value of $n_0$.
\end{corollary}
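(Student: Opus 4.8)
The plan is to extract the large-$m$ asymptotics of the single integral
\[
    \Kint^{-1}\left(w_i(n_0,0),b_j(n,m)\right) = \frac{1}{2\pi}\int_{|\omega|=1} G^k_{ij}(n,n_0;\omega)\,\omega^m\,\frac{d\omega}{i\omega}
\]
from Theorem 1 by a standard contour-deformation argument, treating $n$, $n_0$ as fixed parameters. Writing $\omega = e^{i\theta}$, the integral becomes $\frac{1}{2\pi}\int_{-\pi}^{\pi} G^k_{ij}(n,n_0;e^{i\theta})\,e^{im\theta}\,d\theta$. For large $m$ the phase $e^{im\theta}$ oscillates, so the asymptotics are governed by the singularities of $\omega \mapsto G^k_{ij}(n,n_0;\omega)$ on or near the unit circle. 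First I would identify these singularities: from the definitions \eqref{eq:rdef}, \eqref{eq:vdef} and the coefficient formulas in Appendix A, the functions $r_{i,\pm}(\omega)$ and $\mathbf{v}_{i,\pm}(\omega)$ involve the square root $\sqrt{z_i(\omega)^2 - 4\omega}$, which has branch points where $z_i(\omega)^2 = 4\omega$, and the coefficients have denominators (Wronskian-type determinants) that may vanish. The claim that the leading term is $\frac{1}{\pi m}\Im(\lim_{\theta\to 0^+}G^k_{ij}(n,n_0;e^{i\theta}))$ signals that the dominant contribution comes from $\omega = 1$ (i.e.\ $\theta = 0$), where $z_2(1) = 0$ forces $4\omega - z_i^2 \to 4$ so the branch cut pins the singularity, and where the spectral curve $p(z,\omega)=0$ is known (from section 3.3) to have a root on the unit torus at $\omega=1$.

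The key steps, in order, are as follows. \textbf{Step 1:} Show that $G^k_{ij}(n,n_0;\cdot)$ is analytic on an annulus around $|\omega|=1$ except for a singularity at $\omega = 1$; in particular rule out other singularities on the unit circle for the regime $b - a > 2$ using the root-norm analysis illustrated in Figure \ref{fig:rootnorms} (the relevant eigenvalues satisfy $|r_{1,\pm}|\neq 1$ away from $\omega=1$, so no eigenvalue collision and no vanishing Wronskian occurs there). \textbf{Step 2:} Analyze the nature of the singularity at $\omega = 1$. Since $z_2(\omega) = a(\omega-1)$ vanishes there, $r_{2,\pm}(\omega) \to 1$ and the vectors $\mathbf{v}_{2,\pm}$ degenerate, so $G^k_{ij}$ has a jump discontinuity across $\omega=1$ rather than a pole: the two one-sided limits $\lim_{\theta\to 0^\pm} G^k_{ij}(n,n_0;e^{i\theta})$ exist but differ, and by the reality/symmetry of the integrand under $\omega \mapsto \bar\omega$ they are complex conjugates. \textbf{Step 3:} Apply the standard lemma (a discrete analogue of Watson's lemma / the Fourier-coefficient-of-a-jump estimate): if $f$ is smooth on the circle except for a jump $f(1^+) - f(1^-) = 2i\,\Im f(1^+)$ at $\omega=1$, then $\frac{1}{2\pi}\int f(e^{i\theta})e^{im\theta}d\theta = \frac{f(1^+) - f(1^-)}{2\pi i m} + O(m^{-2}) = \frac{1}{\pi m}\Im f(1^+) + O(m^{-2})$. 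This is proved by integration by parts (or by splitting the contour and deforming each smooth piece), with the $O(m^{-2})$ remainder coming from the next-order smoothness of $G^k_{ij}$ away from $\omega = 1$ and the subleading behavior of the square-root branch point. Combining Steps 1--3 with $f(e^{i\theta}) = G^k_{ij}(n,n_0;e^{i\theta})$ and $f(1^+) = \lim_{\theta\to 0^+}G^k_{ij}(n,n_0;e^{i\theta})$ gives \eqref{eq:corollary1}.

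I expect the main obstacle to be \textbf{Step 2}: verifying carefully that the only singularity on the unit circle is at $\omega = 1$ and that it is genuinely a \emph{jump} (square-root-type branch point giving a $(\omega-1)^{1/2}$ behavior that, combined with the structure of the coefficients, produces a finite jump in $G^k_{ij}$ itself) rather than a pole or a worse singularity. This requires tracking how the branch points of $\sqrt{z_i^2 - 4\omega}$ cross the unit circle as functions of $a,b$ under the hypothesis $b - a > 2$, and checking that the coefficient denominators from Appendix A do not vanish on $|\omega| = 1$ (which would create poles and change the decay rate). One must also confirm that the contributions of $r_{1,\pm}(\omega)^n$ and $r_{2,\pm}(\omega)^n$ for fixed $n$ are bounded and holomorphic near $\omega=1$ so that they only affect the \emph{value} of the jump, not its order — this is the mechanism behind the remark that a vertex left of the interface contributes a factor decaying exponentially in its (fixed) coordinate, since $|r_{1,+}(1)| < 1$ there. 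Once these analytic facts are in hand, the asymptotic extraction is routine.
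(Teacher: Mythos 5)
Your proposal is correct and follows essentially the same route as the paper: both extract the leading term from the jump discontinuity of $G^k_{ij}(n,n_0;e^{i\theta})$ at $\theta=0$ (the paper via the explicit rectangular contour deformation of Figure \ref{fig:gcontour} plus Laplace's method on the two vertical legs, which is exactly the contour-splitting proof you sketch for your jump lemma), and both combine the two one-sided limits via conjugate symmetry to produce the $\frac{1}{\pi m}\Im(\cdot)$ form. Your Steps 1--2 (verifying that $\omega=1$ is the only singularity on the circle and that it is a genuine jump rather than a pole) are checks the paper asserts without proof, so they are a useful supplement rather than a different method.
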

\begin{proof}
To start, let's substitute $\omega = e^{i\theta}$ into equation \eqref{eq:invkasteleyn},
\begin{equation} \label{eq:komegasub}
    \Kint^{-1}\left(w_i(n_0,0),b_j(n,m)\right) = \frac{1}{2\pi} \int_0^{2\pi} G^k_{ij}(n,n_0;e^{i\theta}) e^{i\theta m} d\theta
\end{equation}
We wish to employ the method of steepest decent to compute the asymptotics of the above. The coefficient $G^k_{ij}(n,n_0;e^{i\theta})$ is well-behaved on the interval $(0,2\pi)$. We only need to be aware of the jump discontinuity at the endpoints. 

To start we need to deform the contour in the manner depicted by figure \ref{fig:gcontour}. Consider the three integrals that arise from breaking the contour into the two vertical components and the one horizontal component. 
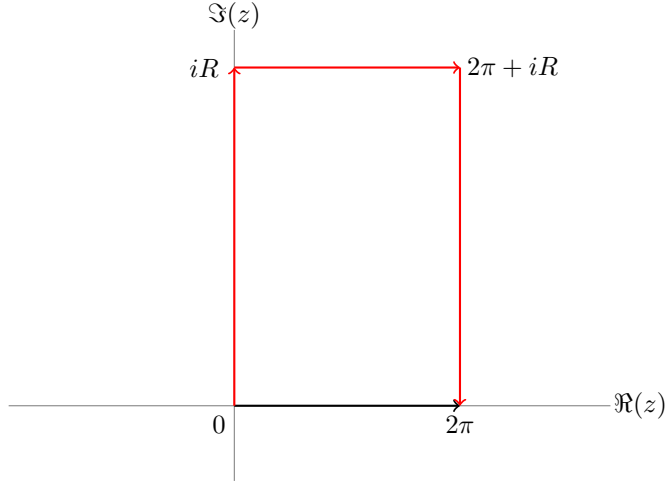
\begin{figure}[h] 
    \centering
    \begin{tikzpicture}
        \draw[gray] (-3,0) -- (5,0);
        \draw[gray] (0,-1) -- (0,5);
        \draw[->,thick] (0,0) -- (3,0);
        \draw[->,thick,red] (0,0) -- (0,4.5);
        \draw[->,thick,red] (0,4.5) -- (3,4.5);
        \draw[->,thick,red] (3,4.5) -- (3,0);
        \node at (5.4,0) {$\Re(z)$};
        \node at (0,5.2) {$\Im(z)$};
        \node at (-0.2,-0.25) {$0$};
        \node at (3,-0.25) {$2\pi$};
        \node at (3.7,4.5) {$2\pi+iR$};
        \node at (-0.4,4.5) {$iR$};
    \end{tikzpicture}
    \caption{\label{fig:gcontour} For the method of steepest descent we deform the original contour (in black) to the contour in red. Note that we let $R \to \infty$.}
\end{figure}
Along the horizontal component, we note that
\begin{equation}
    \Big\|G^k_{ij}(n,n_0;e^{-R}e^{is})e^{-Rm}e^{ism}\Big\| \leq e^{-R}
\end{equation}
so the integrand vanishes as $R \to \infty$. Now we are left to deal with the integrals along the vertical components of the contour. We write the integral along the left most path as, 
\begin{equation}
    \lim_{\theta\to 0^+} \frac{i}{2 \pi} \int_0^\infty G^>_{\up\up}(n,n_0;e^{-s+i\theta})e^{-sm} ds
\end{equation}
and along the right most path as, 
\begin{equation}
    -\lim_{\theta\to 2\pi^-} \frac{i}{2 \pi} \int_0^\infty G^>_{\up\up}(n,n_0;e^{-s+i\theta})e^{-sm} ds
\end{equation}
Note that we need to take limits due to the discontinuity at the endpoints. Since the exponent is real and decaying, we can now use Laplace's method on both these integrals. Immediately we obtain, 
\begin{equation}
    \Kint^{-1}\left(w_i(n_0,0),b_j(n,m)\right) = \frac{1}{2 \pi i m} \left(\lim_{\theta_1 \to 0^+} G^k_{ij}(n,n_0;e^{i\theta_1})- \lim_{\theta_2 \to 2\pi^-} G^k_{ij}(n,n_0;e^{i\theta_2})\right) + \mathcal{O}(m^{-2}) \label{eq:corollary1pre}
\end{equation}
Lastly, we can notice that 
\begin{equation}
    \lim_{\theta_1 \to 0^+} G^k_{ij}(n,n_0;e^{i\theta_1})- \lim_{\theta_2 \to 2\pi^-} G^k_{ij}(n,n_0;e^{i\theta_2}) = 2\Im\left(\lim_{\theta \to 0^+} G^k_{ij}(n,n_0;e^{i\theta})\right)
\end{equation}
and so equation \eqref{eq:corollary1pre} simplifies to what is given in equation \eqref{eq:corollary1}. 
\end{proof}
We would like to take a closer look at the result in Corollary 1 and so it will be useful to consider three specific cases: (1) the case where the vertices both lie to the left of the interface ($n<n_0<0$), (2) the case where the vertices both lie to the right of the interface ($0<n_0<n$), and (3) the case where the vertices lie on opposite sides of the interface ($n_0<0<n$). These cases, and their asymptotics are depicted in figure \ref{fig:closetointerface}. There are certainly more cases we can consider, but looking at these three will be sufficient to understand the behavior around the interface. 
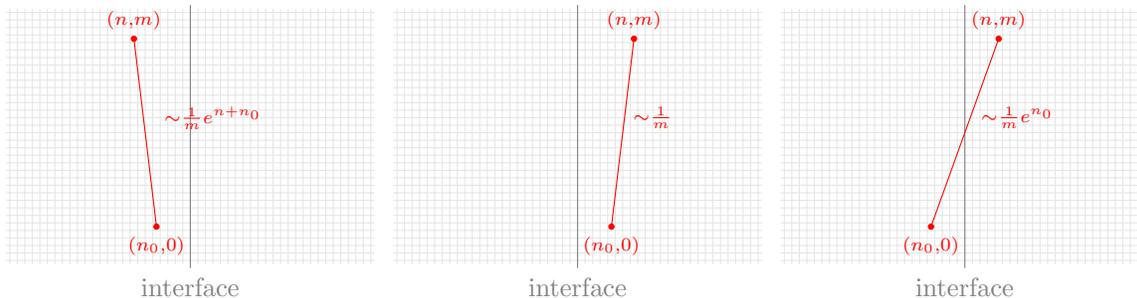
\begin{figure}[ht]
\centering
\begin{subfigure}
    \centering
    \begin{tikzpicture}
        \draw[step=0.1,gray!20,thin] (-2.45,-0.45) grid (2.45,2.95);
        \draw[gray] (0,3) -- (0,-0.5);
        \node[gray] at (0,-0.75) {interface};
        \draw[red] (-0.45,0.05) -- (-0.75,2.55);
        \filldraw[red] (-0.45,0.05) circle (1pt);
        \filldraw[red] (-0.75,2.55) circle (1pt);
        \node[red] at (-0.45,-0.2) {$\scriptstyle (n_0,0)$};
        \node[red] at (-0.75,2.8) {$\scriptstyle (n,m)$};
        \node[red] at (0.3,1.5) {$\scriptstyle \sim \frac{1}{m}e^{n+n_0}$};
    \end{tikzpicture}
\end{subfigure}
\begin{subfigure}
    \centering
    \begin{tikzpicture}
        \draw[step=0.1,gray!20,thin] (-2.45,-0.45) grid (2.45,2.95);
        \draw[gray] (0,3) -- (0,-0.5);
        \node[gray] at (0,-0.75) {interface};
        \draw[red] (0.45,0.05) -- (0.75,2.55);
        \filldraw[red] (0.45,0.05) circle (1pt);
        \filldraw[red] (0.75,2.55) circle (1pt);
        \node[red] at (0.45,-0.2) {$\scriptstyle (n_0,0)$};
        \node[red] at (0.75,2.8) {$\scriptstyle (n,m)$};
        \node[red] at (1,1.5) {$\scriptstyle \sim \frac{1}{m}$};
    \end{tikzpicture}
\end{subfigure}
\begin{subfigure}
    \centering
    \begin{tikzpicture}
        \draw[step=0.1,gray!20,thin] (-2.45,-0.45) grid (2.45,2.95);
        \draw[gray] (0,3) -- (0,-0.5);
        \node[gray] at (0,-0.75) {interface};
        \draw[red] (-0.45,0.05) -- (0.45,2.55);
        \filldraw[red] (-0.45,0.05) circle (1pt);
        \filldraw[red] (0.45,2.55) circle (1pt);
        \node[red] at (-0.45,-0.2) {$\scriptstyle (n_0,0)$};
        \node[red] at (0.45,2.8) {$\scriptstyle (n,m)$};
        \node[red] at (0.7,1.5) {$\scriptstyle \sim \frac{1}{m}e^{n_0}$};
    \end{tikzpicture}
\end{subfigure}
\caption{\label{fig:closetointerface} The three cases we consider when $m$ is large and $n$ and $n_0$ are finite. From left to right the cases are, $n < n_0 < 0$, $0 < n_0 < n$, and $n_0<0<n$.}
\end{figure}

In the first case, where $n<n_0<0$, let's compute the leading order term of the expansion of the $\up\up$ case for $a=1$ and $b=4$,
\begin{equation}
    \Kint^{-1}\left(w_\up(n_0,0),b_\up(n,m)\right) \sim \frac{1}{3m\pi}\left(\frac{-7-3\sqrt{5}}{2}\right)^n\left(\frac{-7+3\sqrt{5}}{2}\right)^{1-n_0}
\end{equation}
And we can see that $n$ and $n_0$ contribute terms that exponentially decay. The analysis follows suit for the other components ($\up\down$, $\down\up$, and $\down\down$) and different values of $a$ and $b$ (so long as $|b-a| >2$).

For the case where $0 < n_0 < n$ we can also compute the first order term of the expansion for the $\up\up$ case where $a = 1$ and $b = 4$. We find, 
\begin{equation}
    \Kint^{-1}\left(w_\up(n_0,0),b_\up(n,m)\right) \sim \frac{1}{3m\pi}
\end{equation}
Since both vertices now lie on the right side of the interface, there are no longer exponentially decaying terms and all decay is inverse in $m$.

And lastly, for the case where $n_0<0<n$ we write out the first order term of the expansion for the $\up\up$ case where $a = 1$ and $b = 4$ as, 
\begin{equation}
    \Kint^{-1}\left(w_\up(n_0,0),b_\up(n,m)\right) \sim \frac{1}{3m\pi}\left(\frac{-7+3\sqrt{5}}{2}\right)^{-n_0}
\end{equation}
We see that the leading order term decays exponentially in $n_0$ but not $n$, since only $n_0$ lies to the left of the interface. It's important to note that although $n_0 < 0$, the term being exponentiated has norm less than $1$ for any value of $a$ and $b$ such that $|b-a|>2$.
\subsection{Asymptotic behavior for large horizontal distances}
Now we would like to comment on the asymptotic behavior of the inverse Kasteleyn matrix when $m$ is fixed and $n$ and $n_0$ are manipulated in a way so that the vertices are far away. There are many ways in which we can accomplish this. In this section, we will consider a few methods to compute the asymptotic behavior for large horizontal distances.

Overall, we find that the asymptotic behavior in the horizontal direction depends on which side of the interface the two vertices lie. For example, if both vertices lie on to the left of the interface, the decay will be exponential in the horizontal distance. However, if both vertices lie to the right of the interface, the decay will be linear in the horizontal distance. 
\subsubsection{Across the interface, $n_0$ fixed.}
First let us consider the case where $n_0$ is fixed, $|n| \to \infty$, and the two vertices are on opposite sides of the interface. This encompasses two cases where one vertex remains close to the interface while the other vertex is far away. These scenarios are shown in figure \ref{fig:acrossintn0fixed}. We expect the asymptotic behavior to be dominated by which side of the interface the vertex indexed by $n$ lies on. 
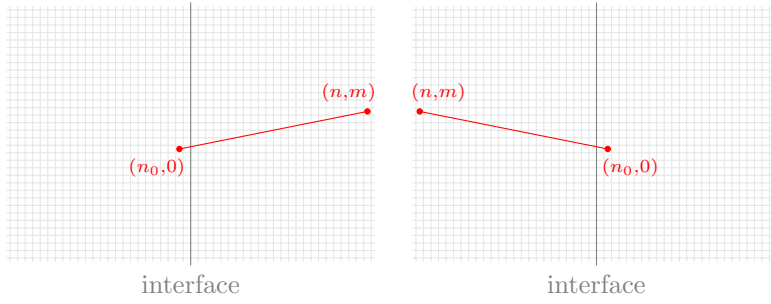
\begin{figure}[ht]
\centering
\begin{subfigure}
    \centering
    \begin{tikzpicture}
        \draw[step=0.1,gray!20,thin] (-2.45,-0.45) grid (2.45,2.95);
        \draw[gray] (0,3) -- (0,-0.5);
        \node[gray] at (0,-0.75) {interface};
        \draw[red] (-0.15,1.05) -- (2.35,1.55);
        \filldraw[red] (-0.15,1.05) circle (1pt);
        \filldraw[red] (2.35,1.55) circle (1pt);
        \node[red] at (-0.45,0.8) {$\scriptstyle (n_0,0)$};
        \node[red] at (2.1,1.8) {$\scriptstyle (n,m)$};
    \end{tikzpicture}
\end{subfigure}
\begin{subfigure}
    \centering
    \begin{tikzpicture}
        \draw[step=0.1,gray!20,thin] (-2.45,-0.45) grid (2.45,2.95);
        \draw[gray] (0,3) -- (0,-0.5);
        \node[gray] at (0,-0.75) {interface};
        \draw[red] (0.15,1.05) -- (-2.35,1.55);
        \filldraw[red] (0.15,1.05) circle (1pt);
        \filldraw[red] (-2.35,1.55) circle (1pt);
        \node[red] at (0.45,0.8) {$\scriptstyle (n_0,0)$};
        \node[red] at (-2.1,1.8) {$\scriptstyle (n,m)$};
    \end{tikzpicture}
\end{subfigure}
\caption{\label{fig:acrossintn0fixed} The two cases we consider where the vertices are located across the interface and $|n|$ is large.}
\end{figure}

Before stating the results, it will be useful to define what I will call little-$g$ functions. These functions are the result of stripping away the $n$ dependence of $G^k_{ij}(n,n;\omega)$ and are only defined for the first and last cases in equations \eqref{eq:g11g21sol1}-\eqref{eq:g12g22sol2}. For example, when $n_0 < 0$ and $n >0$ we define, 
\begin{equation}
    g^<_{ij,n>0}(n_0;\omega) r_{2,-}(\omega)^n = G^<_{ij}(n,n_0;\omega)
\end{equation}
and for $n < n_0 < 0$ we have, 
\begin{equation}
    g^<_{ij,n<n_0}(n_0;\omega) r_{1,+}(\omega)^n = G^<_{ij}(n,n_0;\omega)
\end{equation}
We define the rest of the relevant cases analogously. We will now state two corollaries addressing the scenarios in figure \ref{fig:acrossintn0fixed}. First we have, 
\begin{corollary}
For $n_0<0<n$ and $n_0$ and $m$ fixed the inverse Kasteleyn entries have the following asymptotic expansion,
\begin{equation} \label{eq:asymp2}
    \Kint^{-1}\left(w_i(n_0,0),b_j(n,m)\right) = \frac{1}{2 \pi a n}\left(\lim_{\theta_1 \to 0^+} g^<_{ij,n>0}(n_0;e^{i\theta_1}) + \lim_{\theta_2 \to 2\pi^-} g^<_{ij,n>0}(n_0;e^{i\theta_2}) \right) + \mathcal{O}(n^{-2})
\end{equation}
as $n \to \infty$.
\end{corollary}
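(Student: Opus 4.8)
The plan is to start from the integral representation of Theorem~1 in the regime $n_0<0$, specialize it to $n>0$, and run a Laplace-type argument in the spirit of the proof of Corollary~1. For $n_0<0<n$ the relevant solution is the third branch of \eqref{eq:g11g21sol2}--\eqref{eq:g12g22sol2}, which by construction factors as $G^<_{ij}(n,n_0;\omega)=g^<_{ij,n>0}(n_0;\omega)\,r_{2,-}(\omega)^n$. Substituting $\omega=e^{i\theta}$ into \eqref{eq:invkasteleyn} gives
\begin{equation*}
    \Kint^{-1}\left(w_i(n_0,0),b_j(n,m)\right)=\frac{1}{2\pi}\int_0^{2\pi} g^<_{ij,n>0}(n_0;e^{i\theta})\,r_{2,-}(e^{i\theta})^n\,e^{im\theta}\,d\theta .
\end{equation*}
The first thing I would establish is the structural fact that $|r_{2,-}(e^{i\theta})|\le 1$ on the whole circle, with equality only at $\theta=0$ (equivalently $\theta=2\pi$), where $r_{2,-}(1)=1$: since $z_2(1)=0$, the two eigenvalues $r_{2,\pm}$ coincide at $1$ exactly over $\omega=1$, and elsewhere they are reciprocal ($r_{2,+}r_{2,-}=1$) and real, with $r_{2,+}+r_{2,-}=2+4a^2\sin^2(\theta/2)>2$. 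Hence $r_{2,-}(e^{i\theta})^n$ is exponentially small in $n$ away from $\theta=0$ and the asymptotics localize there.

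Second, I would record the local behaviour of the amplitude at the maximum. Using the relation $(r-1)^2/r=-z_2(\omega)^2/\omega$ (which follows from $\det M_n=1$ and $\operatorname{tr}M_n=2-\omega^{-1}z_2^2$) together with $z_2=a(\omega-1)$ from \eqref{eq:z2def}, a short computation gives, for $\theta\to 0$,
\begin{equation*}
    r_{2,-}(e^{i\theta})=1-a|\theta|+O(\theta^2),\qquad \log r_{2,-}(e^{i\theta})=-a|\theta|+O(|\theta|^3),
\end{equation*}
so $r_{2,-}(e^{i\theta})^n$ equals $e^{-an|\theta|}$ up to a factor $1+O(n|\theta|^3)$, which does not affect the leading term since the integral concentrates on $|\theta|=O(1/n)$. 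I would then split the integral into pieces over $[0,\delta]$, over $[2\pi-\delta,2\pi]$, and an exponentially small remainder over $[\delta,2\pi-\delta]$, where $|r_{2,-}|\le\rho<1$. On $[0,\delta]$, writing $g^<_{ij,n>0}(n_0;e^{i\theta})e^{im\theta}=\lim_{\theta_1\to 0^+}g^<_{ij,n>0}(n_0;e^{i\theta_1})+O(\theta)$ and using $\int_0^\infty e^{-an\theta}\,d\theta=\tfrac{1}{an}$ and $\int_0^\infty\theta e^{-an\theta}\,d\theta=O(n^{-2})$, Laplace's method yields $\tfrac{1}{an}\lim_{\theta_1\to 0^+}g^<_{ij,n>0}(n_0;e^{i\theta_1})+O(n^{-2})$; substituting $\theta=2\pi-s$ on $[2\pi-\delta,2\pi]$ and repeating gives $\tfrac{1}{an}\lim_{\theta_2\to 2\pi^-}g^<_{ij,n>0}(n_0;e^{i\theta_2})+O(n^{-2})$. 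Adding the two contributions and multiplying by $\tfrac{1}{2\pi}$ produces exactly \eqref{eq:asymp2}.

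I expect the main obstacle to be the two ways this departs from the smooth stationary-phase template: the maximum of $|r_{2,-}(e^{i\theta})|$ at $\theta=0$ is a corner (its expansion involves $|\theta|$, not $\theta^2$, precisely because $r_{2,\pm}$ merge there), and $g^<_{ij,n>0}(n_0;\cdot)$ is only one-sidedly continuous at $\omega=1$, which is the source of the two separate limits in \eqref{eq:asymp2}. Making the corner analysis rigorous --- in particular verifying that the $|\theta|$-decay is genuine rather than an artifact of the branch of the square root chosen in \eqref{eq:rdef}, and that all errors are uniformly $O(n^{-2})$ rather than merely $o(n^{-1})$ --- is where the real work sits; once the local model $\int_0^\infty e^{-an\theta}\,d\theta$ is in place, the remaining bookkeeping mirrors the proof of Corollary~1. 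A secondary check is that $g^<_{ij,n>0}(n_0;\omega)$, assembled from the Appendix~A coefficients and the entries of $\mathbf v_{2,-}(\omega)$, stays bounded near $\omega=1$ and extends continuously from each side, which follows by inspecting the explicit formulas.
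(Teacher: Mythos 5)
Your proposal is correct and follows essentially the same route as the paper: the paper's proof of this corollary simply observes that $r_{2,-}(\omega)$ is real on the unit circle with its two equal maxima $r_{2,-}=1$ at $\theta=0$ and $\theta=2\pi$ and invokes Laplace's method with both endpoint contributions, which is exactly your localization argument. The only difference is that you substantiate analytically (via $\det M_n=1$, $\operatorname{tr}M_n=2+4a^2\sin^2(\theta/2)$, and the corner expansion $r_{2,-}=1-a|\theta|+O(\theta^2)$) what the paper justifies only by reference to a numerical plot of $r_{2,-}$, so your write-up is if anything more complete than the original.
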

\begin{proof}
Not that for any $|\omega|=1$, the root $r_{2,-}(\omega)$ is real. Thus solving for the asymptotics of this integral is a direct application of Laplace's method. Plotting the value's of the root over the unit circle (shown in figure \ref{fig:r2mplot}) shows that there are two maximums of equal value ($r_{2,-}(\omega) = 1$) at $\theta = 0$ and $\theta = 2 \pi$, where we let $\omega = e^{i\theta}$. We must account for the contributions from both when applying Laplace's method.
\begin{figure}[ht]
\centering
\includegraphics[scale=0.5]{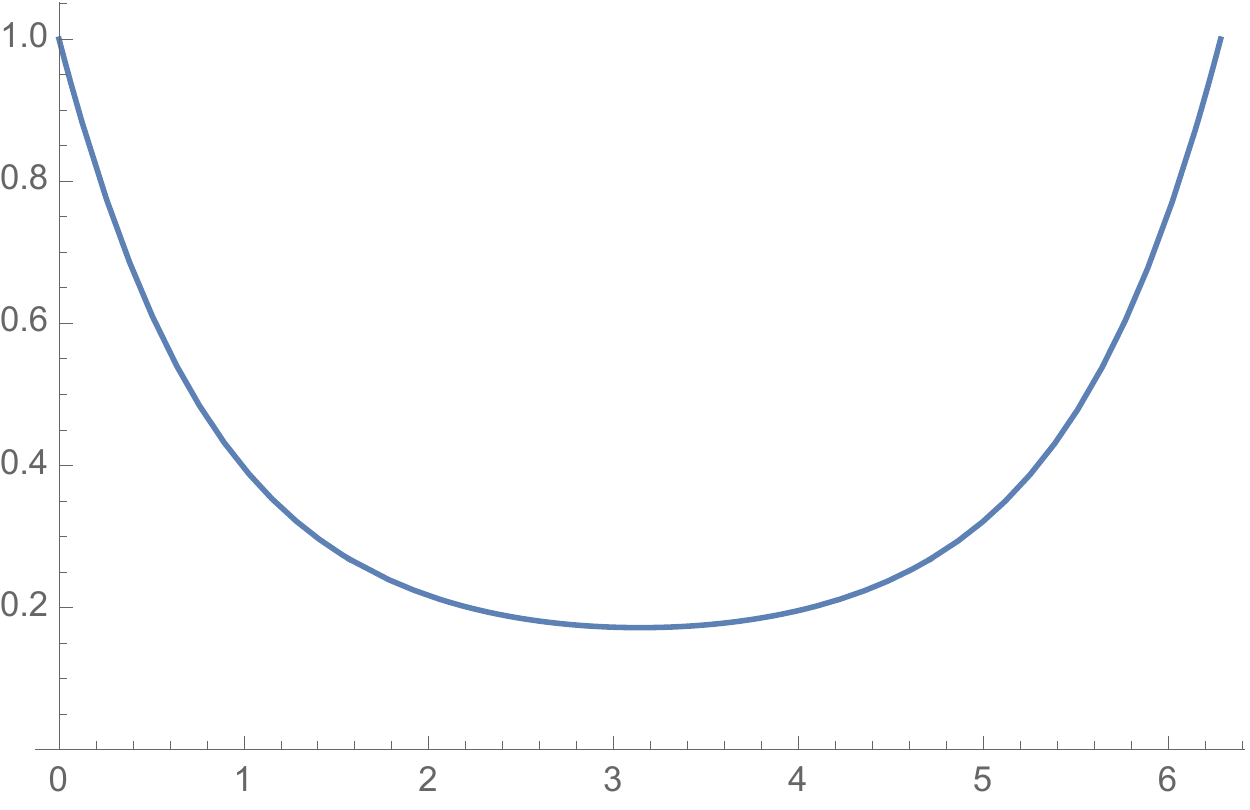}
\caption{\label{fig:r2mplot} Plot of $r_{2,-}(\omega)$ over the unit circle for $a=1$.}
\end{figure}
\end{proof}
The second case consider is when $n < 0 < n_0$ and thus the vertex indexed by $n$ lies to the left of the interface and in the non-critical region. In this instance we have,
\begin{corollary}
For $n < 0 < n_0$ and $n_0$ and $m$ fixed the inverse Kasteleyn operator has the following asymptotic expansion,
\begin{equation} \label{eq:cor3}
    \Kint^{-1}\left(w_i(n_0,0),b_j(n,m)\right) = -\frac{1}{\pi}\Im\left(\lim_{\theta \to 0^+}g^>_{i,j,n<0}\left(n_0,e^{i\theta}\right)\right) \frac{\sqrt{(a-b)^2-4}}{(a+b)n}e^{n r_{1,+}(1)} + \mathcal{O}\left(\frac{e^{nr_{1,+}(1)}}{n^2}\right)
\end{equation}
as $\to -\infty$.
\end{corollary}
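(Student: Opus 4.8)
Since $n_0>0$ we are in the $G^>$ family of Theorem~1, and since $n<0$ the relevant line of \eqref{eq:g11g21sol1}--\eqref{eq:g12g22sol1} is the first one, so $G^>_{ij}(n,n_0;\omega)=g^>_{ij,n<0}(n_0;\omega)\,r_{1,+}(\omega)^n$. The plan is to study
\begin{equation*}
    \Kint^{-1}\!\left(w_i(n_0,0),b_j(n,m)\right)=\frac{1}{2\pi}\int_0^{2\pi} g^>_{ij,n<0}(n_0;e^{i\theta})\,r_{1,+}(e^{i\theta})^n\,e^{im\theta}\,d\theta
\end{equation*}
with $N:=-n\to\infty$ as the large parameter, treating $r_{1,+}(\omega)^n=e^{-N\log r_{1,+}(\omega)}$ as a Laplace kernel. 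First I would record the analytic geometry of the integrand when $b-a>2$. The two branch points of $\sqrt{\z{1}(\omega)^2-4\omega}$ are real and, precisely because $b-a>2$, lie strictly outside the closed unit disk, so $r_{1,+}(\omega)$ is analytic and zero-free on a neighbourhood of $\{0<|\omega|\le 1\}$ with only a simple pole at $\omega=0$; moreover $|r_{1,+}(e^{i\theta})|>1$ attains its circle-minimum at $\omega=1$ (figure~\ref{fig:rootnorms}, and the criticality discussion of \S~3.3), and since $\log|r_{1,+}|$ is harmonic on the punctured disk and $\to+\infty$ at $0$, the strong minimum principle upgrades this to $|r_{1,+}(\omega)|>|r_{1,+}(1)|$ on all of $0<|\omega|<1$. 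By contrast $g^>_{ij,n<0}(n_0;\cdot)$ depends, through the matching coefficient $c_1(n_0;\omega)$ (respectively $d_1$), on $\sqrt{\z{2}(\omega)^2-4\omega}$; because $\z{2}(1)=0$ the two branch points of this radical are real, one in $(0,1)$ and the other greater than $1$, so its cut straddles $\omega=1$, the portion inside the disk being a real segment $[\omega_-,1]$, and --- all remaining data being real --- the two boundary values of $g^>_{ij,n<0}$ across that segment are complex conjugates. Consequently the jump of $g^>_{ij,n<0}(n_0;\cdot)$ across $[\omega_-,1]$ is $2i\,\Im\,g^>_{ij,n<0}(n_0;\omega+i0)$, continuous up to $\omega=1$ with value $2i\,\Im\!\big(\lim_{\theta\to0^+}g^>_{ij,n<0}(n_0;e^{i\theta})\big)$.

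The core step is the contour deformation, carried out exactly as in the proof of Corollary~1 (figure~\ref{fig:gcontour}): push $\theta\in[0,2\pi]$ into $\Im\theta>0$, i.e.\ $\omega=e^{-s+i\theta}$ with $|\omega|<1$. On the horizontal leg $|r_{1,+}(\omega)^n|=|r_{1,+}(\omega)|^{-N}$ decays faster than any power as the leg recedes toward $\omega\to 0$ (the simple pole of $r_{1,+}$), swamping the at-most-polynomial growth of $g^>_{ij,n<0}$ and the factor $|\omega|^m$, so that leg drops out. Inside the open disk the only singularities of the integrand are the cut $[\omega_-,1]$ and possibly poles of the matching coefficients; Theorem~1 guarantees non-degeneracy on $|\omega|=1$, so any such pole lies strictly inside, where $|r_{1,+}|$ strictly exceeds $|r_{1,+}(1)|$, whence its contribution is exponentially smaller than $r_{1,+}(1)^n$ and is absorbed into the error. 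What survives is the difference of the two vertical legs $\theta=0^+$ and $\theta=2\pi^-$, which are the two sides of the cut $[\omega_-,1]$; there $r_{1,+}$ is single-valued (the segment avoids the $\z{1}$-cut), so using the conjugate-boundary-value property the difference collapses --- after $R\to\infty$ --- to a single Laplace integral equal, up to the overall sign fixed by the same orientation bookkeeping as in Corollary~1, to
\begin{equation*}
    \frac{1}{\pi}\int_0^{-\log\omega_-}\Im\!\big(g^>_{ij,n<0}(n_0;e^{-s}+i0)\big)\,r_{1,+}(e^{-s})^n\,e^{-ms}\,ds .
\end{equation*}

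Finally I would apply Watson's lemma to this integral at its dominant endpoint $s=0$, which is an endpoint (not stationary) contribution and so produces the $1/N$ rate together with $r_{1,+}(1)^n$. By the strong minimum principle $|r_{1,+}(e^{-s})|$ is strictly increasing in $s$, and $\frac{d}{ds}\log r_{1,+}(e^{-s})\big|_{s=0}=-r_{1,+}'(1)/r_{1,+}(1)=:\kappa$ is real and positive, so $r_{1,+}(e^{-s})^n=r_{1,+}(1)^n\exp\!\big(-N\kappa s+\mathcal{O}(Ns^2)\big)$. Writing $r_{1,\pm}(\omega)$ as the roots of $\lambda^2-T(\omega)\lambda+1=0$ --- the characteristic equation of $M_n$ for $n\le 0$, equivalently the spectral-curve equation $p(\lambda,\omega)=0$ of \eqref{eq:pzw_ab} --- with $T(\omega)=2+2ab-a^2\omega-b^2\omega^{-1}$, a short computation gives $\kappa=T'(1)/\sqrt{T(1)^2-4}=(a+b)/\sqrt{(a-b)^2-4}$, so $\kappa^{-1}=\sqrt{(a-b)^2-4}/(a+b)$. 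Watson's lemma then shows the integral equals $\Im\!\big(\lim_{\theta\to0^+}g^>_{ij,n<0}(n_0;e^{i\theta})\big)\,r_{1,+}(1)^n\,(N\kappa)^{-1}+\mathcal{O}\!\big(r_{1,+}(1)^n/N^2\big)$; collecting constants and recalling $N=-n$ reproduces \eqref{eq:cor3}, with error $\mathcal{O}(r_{1,+}(1)^n/n^2)$.

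I expect the real obstacle to be the geometry in the second paragraph: establishing, in the non-critical regime $b-a>2$, that the $\sqrt{\z{2}^2-4\omega}$-cut through $\omega=1$ is the only singularity of $g^>_{ij,n<0}(n_0;\cdot)$ reaching the unit circle, that the $\sqrt{\z{1}^2-4\omega}$-cut stays strictly off the closed disk, and that every pole of the matching coefficients inside the disk sits where $|r_{1,+}|$ strictly exceeds its boundary minimum --- together with the verification that $\omega=1$ dominates the deformed contour. Once this is secured, the asymptotic extraction is routine and parallels Corollary~1.
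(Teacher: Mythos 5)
Your proposal follows essentially the same route as the paper: deform the contour of the $\theta$-integral into the upper half-plane as in Figure \ref{fig:gcontour}, discard the receding horizontal leg, reduce the two vertical legs to a single Laplace-type integral, and extract the leading term at the endpoint $s=0$ using the expansion $\log r_{1,+}(e^{-s})=\log r_{1,+}(1)+\frac{a+b}{\sqrt{(a-b)^2-4}}\,s+\mathcal{O}(s^2)$, which yields the rate $\sqrt{(a-b)^2-4}/((a+b)\,n)$. You supply several justifications the paper leaves implicit (location of the branch points of $\sqrt{z_i^2-4\omega}$ for $b-a>2$, the minimum principle for $|r_{1,+}|$ inside the disk, and the conjugate-boundary-value origin of the $\Im$), and your leading factor $r_{1,+}(1)^n=e^{n\log r_{1,+}(1)}$ is in fact what the paper's own expansion produces, the printed $e^{n r_{1,+}(1)}$ in \eqref{eq:cor3} being a typo.
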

In this case, the asymptotic variable $n$ lies to the left of the interface and thus we get exponential decay that depends on $n$ in the leading order term. 
\begin{proof}
We cannot immediately apply Laplace's method because the root $r_{1,+}(e^{i\theta})$ is complex. However, if we deform the contour in the manner shown in figure \ref{fig:gcontour} we will be able to apply Laplace's method appropriately. With this in mind, the proof now follows the methods used in section 5.1.1. First we make sure to write, 
\begin{equation}
    r_{1,+}^n = e^{n \log r_{1,+}}
\end{equation}
where we suppressed the $\omega$ dependance in the notation, and we use the series expansion, 
\begin{equation}
    \log r_{1,+}(e^{-s}) = \log r_{1,+}(1) + \frac{(a+b)}{\sqrt{(a-b)^2-4}}s + \mathcal{O}(s^2)
\end{equation}
From this we get the formula stated above in Corollary 3.
\end{proof}
\subsubsection{Critical side of the lattice, $n_0$ fixed.}
Now let's again consider a set up where $n_0$ is fixed, however now we both vertices will lie to the right of the interface. In particular, $n>n_0>0$. The set up is show in figure \ref{fig:rightofinn0fixed}. As we see in the corollary below, the asymptotic expansion decays inversely with respect to $n$ in the leading order. 
\begin{figure}[ht]
\centering
\begin{tikzpicture}
    \draw[step=0.1,gray!20,thin] (-3.95,-0.45) grid (3.95,2.95);
    \draw[gray] (0,3) -- (0,-0.5);
    \node[gray] at (0,-0.75) {interface};
    \draw[red] (0.15,1.05) -- (3.75,1.55);
    \filldraw[red] (0.15,1.05) circle (1pt);
    \filldraw[red] (3.75,1.55) circle (1pt);
    \node[red] at (0.45,0.8) {$\scriptstyle (n_0,0)$};
    \node[red] at (3.5,1.8) {$\scriptstyle (n,m)$};
\end{tikzpicture}
\caption{\label{fig:rightofinn0fixed} The case where both vertices are to the right of the interface, $n_0$ is finite and $n$ is large.}
\end{figure}
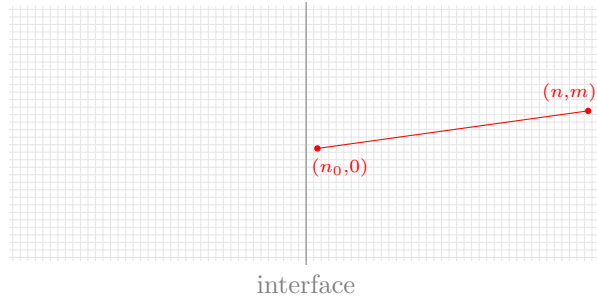
\begin{corollary}
For $n>n_0>0$ and $m$ and $n_0$ are finite, the asymptotic behavior of the inverse Kasteleyn is given by,
\begin{equation}
    \Kint^{-1}\left(w_i(n_0,0),b_j(n,m)\right) = \frac{1}{\pi a n} \left(\lim_{\theta \to 0^+} g^>_{ij,n>n_0}(n_0;e^{i\theta})\right) + \mathcal{O}(n^{-2})
\end{equation}
for $i = \up,\down$ and $j = \up,\down$. 
\end{corollary}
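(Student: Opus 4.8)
The plan is to reduce the integral of Theorem~1 to the same one-dimensional Laplace-type integral handled in Corollary~2. Specializing \eqref{eq:invkasteleyn} to $n_0 > 0$ and using that we are in the range $n > n_0$, the third case of \eqref{eq:g11g21sol1}--\eqref{eq:g12g22sol1} factors out the $n$-dependence as $G^>_{ij}(n,n_0;\omega) = g^>_{ij,n>n_0}(n_0;\omega)\,r_{2,-}(\omega)^n$, so that with $\omega = e^{i\theta}$,
\begin{equation*}
\Kint^{-1}\left(w_i(n_0,0),b_j(n,m)\right) = \frac{1}{2\pi}\int_0^{2\pi} g^>_{ij,n>n_0}(n_0;e^{i\theta})\,r_{2,-}(e^{i\theta})^{\,n}\,e^{i\theta m}\,d\theta .
\end{equation*}
Just as in Corollary~2, $r_{2,-}(e^{i\theta})$ is real for every $\theta$, and Figure~\ref{fig:r2mplot} shows $|r_{2,-}(e^{i\theta})| \le 1$ with equality only at the endpoints $\theta = 0$ and $\theta = 2\pi$; in fact $|r_{2,-}(e^{i\theta})| < 1$ strictly on the open interval, so no contour deformation is needed (unlike Corollary~3) and, as $n \to \infty$, the integral localizes to small neighborhoods of the two endpoints $\theta = 0^+$ and $\theta = 2\pi^-$.

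The next step is the local analysis at the endpoints. A short computation from \eqref{eq:rdef}, using $z_2(\omega) = a(\omega-1)$ and the branch of the square root compatible with $|r_{2,-}|\le 1$, gives
\begin{equation*}
\log r_{2,-}(e^{i\theta}) = -a\,\theta + \mathcal{O}(\theta^2) \quad (\theta \to 0^+), \qquad \log r_{2,-}(e^{i\theta}) = -a\,(2\pi-\theta) + \mathcal{O}\big((2\pi-\theta)^2\big)\quad(\theta \to 2\pi^-).
\end{equation*}
The leading term is \emph{linear} rather than quadratic: $\theta \mapsto r_{2,-}(e^{i\theta})$ has a corner at $\theta = 0$ coming from the jump of the square root in \eqref{eq:rdef} as $z_2(\omega)^2 - 4\omega$ crosses the negative real axis there, and it is precisely this corner that produces $\mathcal{O}(n^{-1})$ decay instead of $\mathcal{O}(n^{-1/2})$. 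Applying Laplace's method at each endpoint — using $e^{i\theta m}\to 1$ and the real-analyticity of $g^>_{ij,n>n_0}(n_0;\cdot)$ on $(0,2\pi)$ — the neighborhood of $\theta = 0^+$ contributes $\tfrac{1}{2\pi a n}\lim_{\theta\to 0^+} g^>_{ij,n>n_0}(n_0;e^{i\theta}) + \mathcal{O}(n^{-2})$, and the neighborhood of $\theta = 2\pi^-$ contributes $\tfrac{1}{2\pi a n}\lim_{\theta\to 2\pi^-} g^>_{ij,n>n_0}(n_0;e^{i\theta}) + \mathcal{O}(n^{-2})$.

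What remains is to combine the two endpoint terms, and this is where the case $n > n_0 > 0$ differs from Corollary~2. Because both marked faces lie strictly to the right of the interface, a direct inspection of the coefficients in Appendix~\ref{appendixcoef} shows that $g^>_{ij,n>n_0}(n_0;\cdot)$ extends continuously to $\omega = 1$ (the singularity of $c_4$ or $d_4$ there is an indeterminate $0\cdot\infty$ form cancelled by the vanishing of $\mathbf{v}_{2,-}(\omega)$, since $z_2(1)=0$ and $r_{2,-}(1)=1$). Together with the conjugation symmetry $g^>_{ij,n>n_0}(n_0;\overline\omega) = \overline{g^>_{ij,n>n_0}(n_0;\omega)}$ on $|\omega|=1$, continuity forces the common limit to be real, so the two one-sided limits coincide; the contributions add, the factor $2$ converts $\tfrac{1}{2\pi a n}$ into $\tfrac{1}{\pi a n}$, and the stated expansion follows.

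I expect the main obstacle to be exactly this last point: verifying that, in contrast with the amplitude $g^<_{ij,n>0}$ of Corollary~2, the amplitude $g^>_{ij,n>n_0}$ is genuinely continuous across $\omega = 1$ — so that the two endpoint contributions reinforce one another rather than merely superpose — while correctly extracting the finite value of the indeterminate product $c_4(n_0;\omega)\,\mathbf{v}_{2,-}(\omega)$ (and its $d_4$ analogue) in that limit. The remaining ingredients — a uniform estimate $|r_{2,-}(e^{i\theta})| \le 1 - c$ on compact subsets of $(0,2\pi)$ to discard the bulk of the integral, and the error bookkeeping in Laplace's method — are routine.
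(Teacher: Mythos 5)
Your proposal is correct and follows essentially the same route as the paper: the paper's proof of this corollary is a one-line reduction to Corollary 2, i.e.\ a direct application of Laplace's method at the two endpoint maxima $\theta = 0,\,2\pi$ of the real-valued $r_{2,-}(e^{i\theta})$, which is exactly your argument. The only substantive material you add is the justification — via the conjugation symmetry on $|\omega|=1$ and the finiteness of the indeterminate product $c_4(n_0;\omega)\mathbf{v}_{2,-}(\omega)$ at $\omega = 1$ — that the two endpoint limits coincide and are real; this is precisely the step the paper leaves implicit when it writes $\tfrac{1}{\pi a n}$ times a single limit here, versus $\tfrac{1}{2\pi a n}$ times a sum of two limits in Corollary 2.
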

\begin{proof}
Just like in the case of equation \eqref{eq:asymp2}, this is a direct application of Laplace's method.
\end{proof}
\subsubsection{Non-critical side of the lattice, $n_0$ fixed}
We will consider the complement of the case in section 5.2.2, that is we will consider the case where $n<n_0<0$ and $n_0$ is fixed and $-n$ becomes large. This scenario is depicted in figure \ref{fig:leftofinn0fixed}.
\begin{figure}[ht]
\centering
\begin{tikzpicture}
    \draw[step=0.1,gray!20,thin] (-3.95,-0.45) grid (3.95,2.95);
    \draw[gray] (0,3) -- (0,-0.5);
    \node[gray] at (0,-0.75) {interface};
    \draw[red] (-0.15,1.05) -- (-3.75,1.55);
    \filldraw[red] (-0.15,1.05) circle (1pt);
    \filldraw[red] (-3.75,1.55) circle (1pt);
    \node[red] at (-0.45,0.8) {$\scriptstyle (n_0,0)$};
    \node[red] at (-3.5,1.8) {$\scriptstyle (n,m)$};
\end{tikzpicture}
\caption{\label{fig:leftofinn0fixed} The case where both vertices are to the left of the interface, $n_0$ is fixed, and $-n$ is large.}
\end{figure}
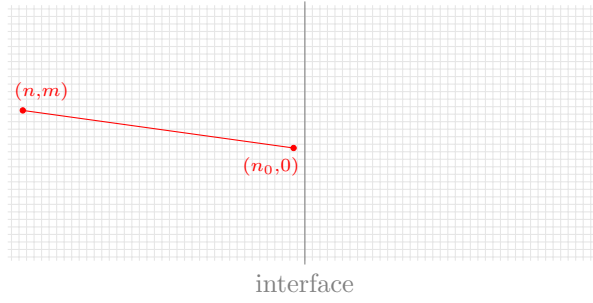
\begin{corollary}
For $n>n_0>0$ and $m$ and $n_0$ are finite, the asymptotic behavior of the inverse Kasteleyn is given by,
\begin{equation}
    \Kint^{-1}\left(w_i(n_0,0),b_j(n,m)\right) = -\frac{1}{\pi}\Im\left(\lim_{\theta \to 0^+}g^<_{i,j,n<n_0}\left(n_0,e^{i\theta}\right)\right) \frac{\sqrt{(a-b)^2-4}}{(a+b)n}e^{n r_{1,+}(1)} + \mathcal{O}\left(\frac{e^{nr_{1,+}(1)}}{n^2}\right)
\end{equation}
for $i = \up,\down$ and $j = \up,\down$. 
\end{corollary}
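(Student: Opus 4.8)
The plan is to run the argument in exact parallel to the proof of Corollary 3, with the roles of $n$ and $n_0$ interchanged; I will only flag the points where the two proofs diverge. (I note in passing that the hypothesis as printed should read $n<n_0<0$, matching Figure~\ref{fig:leftofinn0fixed}, so that $-n$ is the large parameter and $g^<_{ij,n<n_0}$ is the relevant little-$g$ function.) First I would use the fact that for $n\le n_0<0$ the first case of \eqref{eq:g11g21sol2}--\eqref{eq:g12g22sol2}, together with the definition of the little-$g$ functions, gives the factorization $G^<_{ij}(n,n_0;\omega)=g^<_{ij,n<n_0}(n_0;\omega)\,\rp{1}(\omega)^{\,n}$. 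Substituting $\omega=e^{i\theta}$ in \eqref{eq:invkasteleyn} then turns the entry into
\[
    \Kint^{-1}\bigl(w_i(n_0,0),b_j(n,m)\bigr)=\frac{1}{2\pi}\int_0^{2\pi} g^<_{ij,n<n_0}(n_0;e^{i\theta})\,\rp{1}(e^{i\theta})^{\,n}\,e^{i\theta m}\,d\theta ,
\]
and since $m$ and $n_0$ are fixed, all of the $n$-dependence sits in $\rp{1}(e^{i\theta})^{\,n}=e^{n\log \rp{1}(e^{i\theta})}$; this is a Laplace-type integral with large parameter $-n$.

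Next I would locate the dominant point and prepare the contour. In the regime $b-a>2$ the eigenvalue $\rp{1}(e^{i\theta})$ is non-real for $\theta\in(0,2\pi)$, and $\lvert\rp{1}(e^{i\theta})\rvert$ attains its minimum --- the real number $\rp{1}(1)>1$, with $z_1(1)=a-b$ --- only at the endpoints $\theta\in\{0,2\pi\}$; this is precisely the non-critical behaviour in the bottom plots of Figure~\ref{fig:rootnorms}, and it is what forces the exponential decay $\rp{1}(1)^{\,n}\to 0$ as $n\to-\infty$. Because $\rp{1}$ is genuinely complex along the contour, Laplace's method cannot be applied to the $\theta$-integral directly, so I would deform $[0,2\pi]$ to the three-segment path $0\to iR\to 2\pi+iR\to 2\pi$ exactly as in Figure~\ref{fig:gcontour} and the proof of Corollary 1. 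The integrand is analytic in the open strip (the only singular behaviour being the jump of $g^<_{ij,n<n_0}$ at $\theta\in\{0,2\pi\}$, handled by one-sided limits as in \eqref{eq:corollary1pre}), so the value is unchanged, and a routine bound using that $\lvert\rp{1}(\omega)\rvert$ and $g^<_{ij,n<n_0}(n_0;\omega)$ stay controlled as $\lvert\omega\rvert=e^{-R}\to 0$ shows the horizontal segment drops out as $R\to\infty$.

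It then remains to apply Laplace's method on the two vertical rays. On each ray I would parametrize $\omega=e^{-s}$ (approached from $\theta\to 0^+$ and from $\theta\to 2\pi^-$ respectively), write $\rp{1}(e^{-s})^{\,n}=e^{n\log \rp{1}(e^{-s})}$, and insert the same expansion used in the proof of Corollary 3,
\[
    \log \rp{1}(e^{-s})=\log \rp{1}(1)+\frac{a+b}{\sqrt{(a-b)^2-4}}\,s+\mathcal{O}(s^2),
\]
obtained from \eqref{eq:rdef}. Since the linear coefficient is positive and $n<0$, the factor $e^{n\log \rp{1}(e^{-s})}=\rp{1}(1)^{\,n}\exp\!\bigl(-\lvert n\rvert\tfrac{a+b}{\sqrt{(a-b)^2-4}}s\bigr)\bigl(1+\mathcal{O}(s)\bigr)$ is genuinely decaying in $s$, so Laplace's method gives from each ray a contribution proportional to $\rp{1}(1)^{\,n}\,\tfrac{\sqrt{(a-b)^2-4}}{(a+b)(-n)}$ times the endpoint value of $g^<_{ij,n<n_0}$. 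Summing the two rays and invoking the reality of the Kasteleyn data, which yields $g^<_{ij,n<n_0}(n_0;\bar\omega)=\overline{g^<_{ij,n<n_0}(n_0;\omega)}$ and hence that the $\theta\to 2\pi^-$ boundary value is the complex conjugate of the $\theta\to 0^+$ one, collapses the difference of the two contributions into an imaginary part and produces exactly the stated expansion (with $e^{n\rp{1}(1)}$ read as $\rp{1}(1)^{\,n}=e^{n\log \rp{1}(1)}$); the remaining sign bookkeeping is routine.

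The main obstacle is essentially absent in the analytic sense: nothing here is new relative to the proof of Corollary 3. The two genuinely substantive checks are (i) that on the $G^<$ side the prefactor $g^<_{ij,n<n_0}(n_0;\omega)$, built from the coefficients $c'_i,d'_i$ of Appendix~\ref{appendixcoef}, extends analytically into $0<\lvert\omega\rvert<1$ and introduces no poles inside the deformed contour, and (ii) that $\lvert\rp{1}(e^{i\theta})\rvert$ has \emph{no} interior minimum on $(0,2\pi)$, so that the endpoints alone dominate --- this is the one quantitative fact that Figure~\ref{fig:rootnorms} merely illustrates and that should be extracted directly from \eqref{eq:rdef}. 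Both are short computations, and everything else is a transcription of the Corollary 3 argument.
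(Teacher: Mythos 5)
Your proposal is correct and follows exactly the route the paper intends: the paper's own proof of this corollary is a one-line statement that it ``follows the same procedure as the proof of equation \eqref{eq:cor3},'' and your write-up is precisely that Corollary 3 argument (contour deformation of Figure \ref{fig:gcontour}, the expansion of $\log r_{1,+}(e^{-s})$, Laplace's method on the two vertical rays, and the collapse into an imaginary part) transcribed to the $G^<$, $n<n_0<0$ case. You also correctly flag the typographical issues in the statement (the hypothesis should read $n<n_0<0$, and $e^{n r_{1,+}(1)}$ should be read as $r_{1,+}(1)^n$), and your two ``substantive checks'' are reasonable additions that the paper itself glosses over.
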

\begin{proof}
The proof of this corollary follows the same procedure as the proof of equation \eqref{eq:cor3}.
\end{proof}
\subsubsection{Same side of the interface, $n$ and $n_0$ proportional}
Now let's look at the two cases where the vertices are far apart (horizontally) and well to one side of the interface, i.e. neither vertex is fixed by the boundary. The two scenarios of interest are depicted in figure \ref{fig:proportionalcases}.
\begin{figure}[ht]
\centering
\begin{subfigure}
    \centering
    \begin{tikzpicture}
        \draw[step=0.1,gray!20,thin] (-5.45,-0.45) grid (0.45,2.95);
        \draw[gray] (0,3) -- (0,-0.5);
        \node[gray] at (0,-0.75) {interface};
        \draw[red] (-2.15,1.05) -- (-5.35,1.55);
        \filldraw[red] (-2.15,1.05) circle (1pt);
        \filldraw[red] (-5.35,1.55) circle (1pt);
        \node[red] at (-2.45,0.8) {$\scriptstyle (n_0,0)$};
        \node[red] at (-5.1,1.8) {$\scriptstyle (n,m)$};
    \end{tikzpicture}
\end{subfigure}
\begin{subfigure}
    \centering
    \begin{tikzpicture}
        \draw[step=0.1,gray!20,thin] (-0.45,-0.45) grid (5.45,2.95);
        \draw[gray] (0,3) -- (0,-0.5);
        \node[gray] at (0,-0.75) {interface};
        \draw[red] (2.15,1.05) -- (5.35,1.55);
        \filldraw[red] (2.15,1.05) circle (1pt);
        \filldraw[red] (5.35,1.55) circle (1pt);
        \node[red] at (2.45,0.8) {$\scriptstyle (n_0,0)$};
        \node[red] at (5.1,1.8) {$\scriptstyle (n,m)$};
    \end{tikzpicture}
\end{subfigure}
\caption{\label{fig:proportionalcases} The above diagrams depict the cases where $n$ and $n_0$ lie on the same side of the interface and both indices become large at a proportional rate.}
\end{figure}
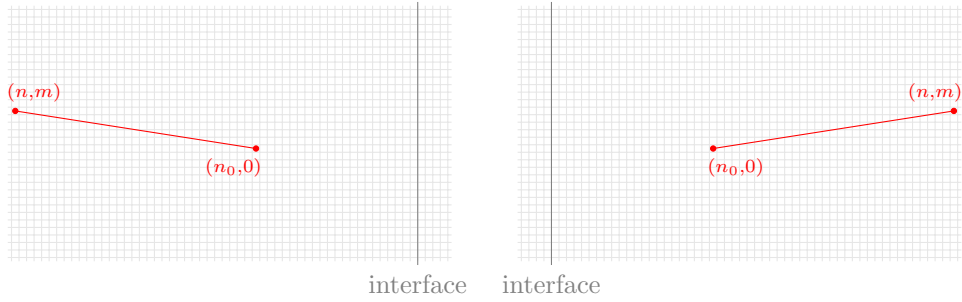
To describe the asymptotic behavior succinctly, it will be useful to define some notation. We define the coefficients $c_{4,1}(\omega)$, $c_{4,2}(\omega)$, $c'_{1,1}(\omega)$ and $c'_{1,2}(\omega)$ by the equations
\begin{equation}
    c_1(\omega) = r_{2,-}(\omega)^{-n_0}c_{4,1}(\omega) + r_{2,+}(\omega)^{-n_0}c_{4,2}(\omega)
\end{equation}
\begin{equation}
    c'_1(\omega) = r_{1,+}(\omega)^{-n_0}c'_{1,1}(\omega)+r_{1,-}(\omega)^{-n_0}c'_{1,2}(\omega)
\end{equation}
One should note that the newly defined coefficients have no dependence on $n_0$. We will first state the corollary describing the asymptotic behavior for when both vertices lie to the right of the interface, in other words the scenario depicted on the right side of figure \ref{fig:proportionalcases}.
\begin{corollary}
For $n_0 = N$, $n = pN$ for some $p > 1$, and $m$ finite, the inverse Kasteleyn operator has the following behavior as $N \to \infty$,
\begin{equation} \label{eq:critical2}
    \Kint^{-1}\left(w_\up(n_0,0),b_\up(n,m)\right) = \frac{1}{a N \pi}\left(\frac{1}{p-1}\lim_{\theta \to 0^+}c_{4,1}(e^{i\theta})z_2(e^{i\theta}) + \frac{1}{p+1}\lim_{\theta \to 0^+}c_{4,2}(e^{i\theta})z_2(e^{i\theta}) \right) + \mathcal{O}(n^{-2})
\end{equation}
\end{corollary}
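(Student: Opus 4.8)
The plan is to reduce \eqref{eq:critical2} to a pair of Laplace estimates at the endpoints $\theta=0$ and $\theta=2\pi$, in the same spirit as the analysis leading to \eqref{eq:asymp2}, with one extra algebraic step to absorb the proportional scaling $n_0=N$, $n=pN$. Since $n_0=N>0$ and $n=pN>N$, the vertex $b_\up(pN,m)$ lies in the third branch of \eqref{eq:g11g21sol1}, so taking the first component of $\mathbf{v}_{2,-}$ (see \eqref{eq:vdef}),
\[
    G^>_{\up\up}(pN,N;\omega)=c_4(N;\omega)\,\omega^{-1}z_2(\omega)\,r_{2,-}(\omega)^{pN}.
\]
Now I would use the decomposition of the coefficient $c_4(N;\omega)$ into the $N$-independent pieces $c_{4,1}(\omega),c_{4,2}(\omega)$ introduced before the statement, $c_4(N;\omega)=r_{2,-}(\omega)^{-N}c_{4,1}(\omega)+r_{2,+}(\omega)^{-N}c_{4,2}(\omega)$, together with the identity $r_{2,+}(\omega)r_{2,-}(\omega)=1$ (the transfer matrices $M_n$ have determinant $1$, so $r_{2,+}(\omega)^{-N}=r_{2,-}(\omega)^{N}$), to get
\[
    G^>_{\up\up}(pN,N;\omega)\,\omega^{m}=\left(c_{4,1}(\omega)\,r_{2,-}(\omega)^{(p-1)N}+c_{4,2}(\omega)\,r_{2,-}(\omega)^{(p+1)N}\right)\omega^{m-1}z_2(\omega).
\]
Feeding this into \eqref{eq:invkasteleyn} and writing $\omega=e^{i\theta}$ presents $\Kint^{-1}(w_\up(N,0),b_\up(pN,m))$ as a sum of two integrals $\frac{1}{2\pi}\int_0^{2\pi}h_\ell(\theta)\,r_{2,-}(e^{i\theta})^{q_\ell N}\,d\theta$, $\ell=1,2$, with $q_1=p-1$, $q_2=p+1$ and $h_\ell(\theta)=c_{4,\ell}(e^{i\theta})\,e^{i(m-1)\theta}\,z_2(e^{i\theta})$.

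The second step is the Laplace analysis. As in the argument for \eqref{eq:asymp2}, $r_{2,-}(e^{i\theta})$ is real and, by figure \ref{fig:r2mplot}, attains its maximum value $1$ only at $\theta=0$ and $\theta=2\pi$ (the two roots collide there because $z_2(1)=0$), so no contour deformation is required and each integral concentrates at the two endpoints. Near $\theta=0$ one has $z_2(e^{i\theta})=ai\theta+\mathcal{O}(\theta^2)$ from \eqref{eq:z2def}, and substituting into \eqref{eq:rdef} (with the branch of the square root fixed so that $|r_{2,-}(e^{i\theta})|\le1$, consistent with figure \ref{fig:r2mplot}) gives $r_{2,-}(e^{i\theta})=1-a\theta+\mathcal{O}(\theta^2)$, and symmetrically $r_{2,-}(e^{i\theta})=1-a(2\pi-\theta)+\mathcal{O}((2\pi-\theta)^2)$ near $\theta=2\pi$; hence $r_{2,-}(e^{i\theta})^{q_\ell N}\approx e^{-a q_\ell N\theta}$ near $0$ and $\approx e^{-a q_\ell N(2\pi-\theta)}$ near $2\pi$. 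A standard Laplace estimate at each endpoint then contributes $\frac{1}{2\pi}\cdot\frac{h_\ell(0^+)}{a q_\ell N}$ from $\theta=0$ and $\frac{1}{2\pi}\cdot\frac{h_\ell(2\pi^-)}{a q_\ell N}$ from $\theta=2\pi$, with $\mathcal{O}(N^{-2})$ remainder (the bulk of $(0,2\pi)$ is exponentially negligible since $r_{2,-}<1$ there). Because $\Kint^{-1}$ is real one has $h_\ell(2\pi^-)=\overline{h_\ell(0^+)}$, so summing over $\ell$ and over the two endpoints and using $h_\ell(0^+)=\lim_{\theta\to0^+}c_{4,\ell}(e^{i\theta})z_2(e^{i\theta})$ (since $e^{i(m-1)\theta}\to1$) gives exactly \eqref{eq:critical2}, provided this limit is real.

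The main obstacle is controlling the prefactors $h_\ell$ at the endpoints, which requires the explicit coefficient formulas of Appendix \ref{appendixcoef}. Each $c_{4,\ell}(\omega)$ inherits, through the matching conditions, a denominator proportional to the Casoratian of the two fundamental solutions of the $n>0$ recursion, i.e.\ to $r_{2,+}(\omega)-r_{2,-}(\omega)=\omega^{-1}z_2(\omega)\sqrt{z_2(\omega)^2-4\omega}$, which vanishes like $\theta$ as $\theta\to0$; thus $c_{4,\ell}(e^{i\theta})$ blows up like $\theta^{-1}$ at the endpoints, and it is precisely the factor $z_2(e^{i\theta})=\mathcal{O}(\theta)$ in $h_\ell$ that cancels this singularity and makes $\lim_{\theta\to0^+}c_{4,\ell}(e^{i\theta})z_2(e^{i\theta})$ a finite (generically nonzero) number. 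So the work is: carry out this cancellation explicitly; verify that $c_{4,\ell}$ has no further singularities on $|\omega|=1$, so that $h_\ell$ extends continuously to $[0,2\pi]$ with (in general distinct) one-sided limits at $\theta=0$ and $\theta=2\pi$, where the integrand genuinely jumps; check that these one-sided limits combine to the real expression displayed in \eqref{eq:critical2}; and confirm that the $\mathcal{O}(\theta^2)$ corrections to $r_{2,-}$ and the smoothness of $h_\ell$ on $(0,2\pi)$ only feed the $\mathcal{O}(N^{-2})$ remainder. This is essentially the bookkeeping already done for Corollary 1 and for \eqref{eq:asymp2}, now run twice with the two decay rates $p-1$ and $p+1$.
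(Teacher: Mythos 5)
Your proposal is correct and follows essentially the same route as the paper: decompose $c_4(N;\omega)$ into the $N$-independent pieces $c_{4,1},c_{4,2}$, split the integral into two, and apply Laplace's method at the endpoints $\theta=0,2\pi$ with decay rates $(p-1)$ and $(p+1)$ (your use of $r_{2,+}r_{2,-}=1$ is just an explicit form of the paper's exponent $-N\log r_{2,+}+pN\log r_{2,-}$). The extra detail you supply on the endpoint expansions of $z_2$ and $r_{2,-}$ and on the cancellation of the $\theta^{-1}$ singularity in $c_{4,\ell}$ against the $z_2=\mathcal{O}(\theta)$ factor is consistent with, and somewhat more careful than, the paper's terse argument.
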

\begin{proof}
We need to first split the integral up into two, 
\begin{align}
    \Kint^{-1}\left(w_i(n_0,0),b_j(n,m)\right) &= \frac{1}{2\pi} \int_0^{2\pi} c_{4,1}(e^{i\theta})z_2(e^{i\theta})e^{-i\theta}r_{2,-}(e^{i\theta})^{-n_0}r_{2,-}(e^{i\theta})^n e^{i\theta m} d\theta \notag \\
    &\quad\quad\quad\quad+ \frac{1}{2\pi} \int_0^{2\pi} c_{4,2}(e^{i\theta})z_2(e^{i\theta})e^{-i\theta}r_{2,+}(e^{i\theta})^{-n_0}r_{2,-}(e^{i\theta})^n e^{i\theta m} d\theta
\end{align}
and rewrite them in the following manner, 
\begin{align*}
    \int_0^{2\pi} c_{4,1}(e^{i\theta})z_2(e^{i\theta})e^{-i\theta}r_{2,-}(e^{i\theta})^{-n_0}r_{2,-}(e^{i\theta})^n e^{i\theta m} d\theta &=  \int_0^{2\pi} c_{4,1}(e^{i\theta})z_2e^{i(m-1)\theta}\exp\left(N(p-1)\log r_{2,-}\right)  d\theta\\ 
    \int_0^{2\pi} c_{4,2}(e^{i\theta})z_2(e^{i\theta})e^{-i\theta}r_{2,+}(e^{i\theta})^{-n_0}r_{2,-}(e^{i\theta})^n e^{i\theta m} d\theta &= \int_0^{2\pi} c_{4,2}(e^{i\theta})z_2e^{i(m-1)\theta}\exp(-N \log r_{2,+} + pN \log r_{2,-})  d\theta
\end{align*}
In both the the above integrals, the function in the exponent has constant imaginary part and the real part decays appropriately. Thus we may use Laplace's method on the two integrals and the result is given by the asymptotic expansion in equation \eqref{eq:critical2}.
\end{proof}
As expected, corollary 6 shows us the the inverse Kasteleyn entries decay inversely in $N$. Now for the case seen in the left most diagram of figure \ref{fig:proportionalcases} we state the following corollary,
\begin{corollary}
For $n_0 = -N$, $n = -pN$ for some $p > 1$, and $m$ finite, the inverse Kasteleyn operator has the following behavior as $N \to \infty$,
\begin{align} 
    \Kint^{-1}\left(w_\up(n_0,0),b_\up(n,m)\right) &= -\frac{\sqrt{(a-b)^2-4}}{\pi(a+b)(1+p)N}\exp(-pN \log r_{1,+}(1)+N\log r_{1,-}(1))\left(\lim_{\theta\to0^+}c'_{1,2}(e^{i\theta})z_1(e^{i\theta})\right) \notag\\
    &\quad\quad\quad\quad\quad\quad +\mathcal{O}\left(\frac{\exp(-pN \log r_{1,+}(1)+N\log r_{1,-}(1))}{N^2}\right)
\end{align}
\end{corollary}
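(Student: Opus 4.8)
The plan is to proceed exactly as in the proof of Corollary 6, but with the relevant eigenvalue being $r_{1,+}(\omega)$ (which is complex on the unit circle) rather than the real root $r_{2,-}(\omega)$, so that a contour deformation is required before Laplace's method can be applied. First I would write the inverse Kasteleyn entry as $\frac{1}{2\pi}\int_0^{2\pi} G^<_{\up\up}(n,n_0;e^{i\theta})e^{i\theta m}\,d\theta$ and substitute the little-$g$ decomposition for the regime $n<n_0<0$ together with the defining relations $c'_1(\omega)=r_{1,+}(\omega)^{-n_0}c'_{1,1}(\omega)+r_{1,-}(\omega)^{-n_0}c'_{1,2}(\omega)$. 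With $n_0=-N$ and $n=-pN$ this splits the integral into two pieces whose integrands carry exponential factors $\exp\!\big(N\log r_{1,+}+(-pN)\log r_{1,+}\big)=\exp\!\big((1-p)N\log r_{1,+}\big)$ and $\exp\!\big(N\log r_{1,-}-pN\log r_{1,+}\big)$ respectively. Since $|b-a|>2$ puts us in the non-critical regime, $|r_{1,+}(1)|<1<|r_{1,-}(1)|$ on the unit circle (recall the root-norm plots in Figure \ref{fig:rootnorms}), so for $p>1$ the dominant term is the one built from $c'_{1,2}$, whose exponent is $-pN\log r_{1,+}(1)+N\log r_{1,-}(1)$; the $c'_{1,1}$ term is exponentially smaller and is absorbed into the error.

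Next I would carry out the contour deformation of Figure \ref{fig:gcontour}, pushing the $\theta$-contour from $[0,2\pi]$ up to the three-sided path through $iR$ and $2\pi+iR$. The horizontal piece at height $R$ vanishes as $R\to\infty$ by the same bound used in Corollary 1 (the integrand is bounded by $e^{-R}$ up to polynomial factors, using that the exponential factors have modulus controlled by $|r_{1,\pm}(e^{-R+is})|$ which stays bounded). One is then left with two vertical integrals $\int_0^\infty(\cdot)e^{-sm}\,ds$ emanating from $\theta\to0^+$ and $\theta\to2\pi^-$; because the dominant exponent $-pN\log r_{1,+}+N\log r_{1,-}$ is now being evaluated along a ray where $\log r_{1,\pm}$ is real-analytic, I expand
\[
    -pN\log r_{1,+}(e^{-s})+N\log r_{1,-}(e^{-s}) = \big(-pN\log r_{1,+}(1)+N\log r_{1,-}(1)\big) + \Big(\tfrac{p(a+b)}{\sqrt{(a-b)^2-4}}+c\Big)Ns + \mathcal{O}(Ns^2),
\]
where the $\log r_{1,+}$ derivative at $s{=}0$ is $\frac{a+b}{\sqrt{(a-b)^2-4}}$ as already recorded in the proof of Corollary 3 and $c$ is the analogous (bounded) derivative of $\log r_{1,-}$. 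Applying Laplace's method to each vertical integral produces a leading term of order $N^{-1}$ times $\exp(-pN\log r_{1,+}(1)+N\log r_{1,-}(1))$, and combining the $\theta\to0^+$ and $\theta\to2\pi^-$ contributions yields the imaginary-part/limit structure; evaluating the prefactor $\frac{1}{2\pi}\cdot\frac{1}{\text{(exponent derivative)}}$ and collecting the $\frac{1}{1+p}$ coming from the coefficient $p$ of $N$ in the linearization gives the stated constant $-\frac{\sqrt{(a-b)^2-4}}{\pi(a+b)(1+p)N}$ multiplying $\lim_{\theta\to0^+}c'_{1,2}(e^{i\theta})z_1(e^{i\theta})$ (the $z_1$ factor and the $e^{-i\theta}$ from $d\omega/(i\omega)$ combining as in Corollary 6).

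I do not expect any genuinely novel obstacle, since every ingredient has appeared already: the contour deformation and Laplace argument mirror Corollaries 1 and 3, and the coefficient bookkeeping mirrors Corollary 6. The one point requiring care is verifying that the subdominant $c'_{1,1}$-term really is exponentially negligible for all $p>1$ and all $a,b$ with $|b-a|>2$ — i.e. that $(1-p)\log r_{1,+}(1) < -p\log r_{1,+}(1)+\log r_{1,-}(1)$, equivalently $\log r_{1,+}(1)<\log r_{1,-}(1)$, which holds because $r_{1,+}(1)r_{1,-}(1)=1$ and $r_{1,+}(1)<1$ in the non-critical regime; and checking that the maximum of the real part of the exponent along the deformed contour is attained only at the endpoint $s=0$ (so that no interior critical point contributes), which follows from monotonicity of $|r_{1,+}(e^{-s})|$ near $s=0$ established by the first-order expansion above. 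Once these are in hand, the asymptotic expansion stated in Corollary 7 follows.
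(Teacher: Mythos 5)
Your overall architecture is the right one — split the integral via $c'_1(\omega)=r_{1,+}(\omega)^{-n_0}c'_{1,1}(\omega)+r_{1,-}(\omega)^{-n_0}c'_{1,2}(\omega)$, deform the contour as in Figure \ref{fig:gcontour}, and run Laplace's method at the endpoints with the expansion of $\log r_{1,\pm}$ from Corollary 3. But the step where you dispose of the $c'_{1,1}$ term is wrong, and it is the one step that actually matters. You claim $|r_{1,+}(1)|<1<|r_{1,-}(1)|$, so that the $c'_{1,1}$ integrand (exponent $(1-p)N\log r_{1,+}$) is exponentially smaller than the $c'_{1,2}$ integrand (exponent $-pN\log r_{1,+}+N\log r_{1,-}$). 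The inequality is backwards: the general solution \eqref{eq:g11g21sol2} uses $r_{1,+}(\omega)^n$ as the branch that satisfies the decay condition \eqref{eq:greenslimit} as $n\to-\infty$, which forces $|r_{1,+}(\omega)|>1>|r_{1,-}(\omega)|$ on the unit circle in the non-critical regime (this is also what makes the stated leading term $\exp(-pN\log r_{1,+}(1)+N\log r_{1,-}(1))$ decay). Hence the difference of the two exponents is $N\big(\log r_{1,+}(1)-\log r_{1,-}(1)\big)>0$: the $c'_{1,1}$ piece is exponentially \emph{larger} pointwise, and a naive Laplace estimate applied to it would swamp the answer you are trying to prove.

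The paper's mechanism for killing that term is different and cannot be replaced by a size comparison: after deforming to the contour of Figure \ref{fig:gcontour}, the $c'_{1,1}$ integral vanishes identically — the integrand has no discontinuity across $\theta=0\equiv 2\pi$, so the two vertical legs cancel exactly and the horizontal leg vanishes as $R\to\infty$. Only the $c'_{1,2}$ integral has a nonzero endpoint jump, and it alone produces the leading term, exponentially small though it is relative to the discarded integrand. To repair your proof you must verify this exact cancellation (equivalently, that the jump of $c'_{1,1}(e^{i\theta})z_1(e^{i\theta})e^{i(m-1)\theta}r_{1,+}(e^{i\theta})^{n-n_0}$ at $\theta=0$ is zero), not merely bound the term. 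A secondary remark: your identity $r_{1,+}(1)r_{1,-}(1)=1$ is consistent with $r_{1,\pm}$ being eigenvalues of the determinant-one transfer matrix $M_n$, but even granting it, reciprocity plus $r_{1,+}(1)>1$ gives $\log r_{1,+}(1)>\log r_{1,-}(1)$, i.e.\ the opposite of what your argument needs.
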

\begin{proof}
Similarly to the work shown in corollary 6, we can expand the one integral into two integrals,
\begin{align*}
    \frac{1}{2\pi}\int_0^{2 \pi} G^<_{\up\up}(n,n_0;e^{i\theta})e^{im\theta} d\theta =& \frac{1}{2\pi}\int_0^{2\pi} c'_{1,1}(e^{i\theta})z_1(e^{i\theta})e^{i(m-1)\theta}\exp(-N(p-1)\log r_{1,+}(e^{i\theta})) d\theta\\
    &\quad\quad + \frac{1}{2\pi}\int_0^{2\pi} c'_{1,2}(e^{i\theta})z_1(e^{i\theta})e^{i(m-1)\theta}\exp(-pN \log r_{1,+}(e^{i\theta})+N\log r_{1,-}(e^{i\theta})) d\theta
\end{align*}
However these exponents are complex, and so we have to deform the contour in the plane. The contour given in figure \ref{fig:gcontour} will again work. When we deform the first integral on the right side above, the integrals sum to zero. Thus, 
\begin{align*}
    \frac{1}{2\pi}\int_0^{2 \pi} G^<_{\up\up}(n,n_0;e^{i\theta})e^{im\theta} d\theta =& \frac{1}{2\pi}\int_0^{2\pi} c'_{1,2}(e^{i\theta})z_1(e^{i\theta})e^{i(m-1)\theta}\exp(-pN \log r_{1,+}(e^{i\theta})+N\log r_{1,-}(e^{i\theta})) d\theta
\end{align*}
Computing the asymptotic expansion of this follows the work detail in corollary 3. 
\end{proof}
\section{Concluding remarks}
The square lattice with interface allows us to study the asymptotic behavior for a partially non-periodic example of a dimer model. In this instance, we find that the asymptotics of the inverse Kasteleyn behaves predictably as a combination of the two weightings found on either side of the interface. We expect, though it needs to be shown, that the combination of different and or multiple weights in an analogous set up would produce similarly predictable results.

While this work presents a local investigation of the limit shape behavior for the lattice, to gain further understanding of the global behavior it would be worthwhile to investigate the height function (and its asymptotics) of the lattice. One could also investigate whether or not said height function converges to a Gaussian free field in the scaling limit. 

Much like the work in \cite{Kenyon02}, this work can be related to a similar continuous problem regarding the Dirac operator on the plane. In a Dirac type set up, one should consider the mass of the operator changing value or vanishing when $x > 0$. In this continuous setting, I have achieved similarly predictable results to the ones presented here. 

There are many extensions of this work that are worth noting and that, to our knowledge, have not been investigated. Firstly, we would like to consider the interface on more arbitrary bounded domains. In the case of bounded domains, it is unclear whether the limit shape assumes a similar predictable form or achieves something novel. Secondly, one can consider fixing boundary conditions on the interface or at the ends of the lattice and the affect, if any, this has on the limit shape behavior. 
\appendix
\section{Coefficients} \label{appendixcoef}
Below we explicitly state the coefficients present in the Green's function of the Kasteleyn operator with interface that is given in equations \eqref{eq:g11g21sol1}-\eqref{eq:g12g22sol2}. While most of the functions presented here are dependent on the parameter $\omega$, we drop the notation to fit the equations nicely on the page. 
\begin{align}
    c_1 &= \frac{(1 - \rmi{2}) \rp{2}^{-n_0} \omega}{\rp{1}\z{1}(-1+\rmi{2}) + \rmi{2}\z{2}(1-\rp{1})} \\
    c_2 &= \frac{(-1 + \rmi{2}) \rp{2}^{-n_0} \omega}{\z{2}(\rp{2} - \rmi{2})} \\
    c_3 &= \frac{(-1 + \rmi{2}) \rp{2}^{-n_0}\omega \left(\rp{1}\z{1}(1-\rp{2}) + \rp{2}\z{2}(-1 + 
   \rp{1})\right)}{(\rmi{2} - \rp{2}) \z{2}\left(\rp{1}\z{1}(1 - \rmi{2}) + \rmi{2}\z{2}(-1 + \rp{1})\right)} \\
   c_4 &= \frac{\rmi{2}^{-n0} (-1 + \rp{2}) \omega}{\z{2} (\rp{2}-\rmi{2})} + c_3
\end{align}
\begin{align}
    d_1 &= -\frac{\rmi{2} \rp{2}^{1 - n_0} \z{2}}{\rp{1}\z{1}(1 - \rmi{2})+ \rmi{2}\z{2}(-1 + \rp{1})}\\
    d_2 &= \frac{\rmi{2} \rp{2}^{1 - n_0}}{\rmi{2} - \rp{2}}\\
    d_3 &= \frac{\rmi{2} \rp{2}^{1 - n_0} \left(\rp{1}\z{1}(-1 +\rp{2}) + \rp{2}\z{2}(1 - \rp{1})\right)}{(\rp{2} - 
    \rmi{2})\left(\rp{1}\z{1}(-1 + \rmi{2}) + \rmi{2}\z{2}(1 - \rp{1})\right)}\\
    d_4 &= \frac{\rp{2} \rmi{2}^{-n_0 + 1}}{\rmi{2} - \rp{2}} + 
 d_3
\end{align}
\begin{align}
    c'_1 &= \frac{(-1 + \rmi{1}) \rp{1}^{-n_0} \omega}{(\rp{1} - \rmi{1}) \z{1}} + c'_2\\
    c'_2 &= \frac{\rmi{1}^{-n_0} (-1 + \rp{1})\omega\left(\rmi{1}\z{1}(-1 + \rmi{2}) + \rmi{2}\z{2}(1 - \rmi{1})\right)}{(\rmi{1} - \rp{1})\z{1} \left(\rp{1}\z{1}(-1 + \rmi{2}) + \rmi{2}\z{2}(1 - \rp{1})\right)}\\
    c'_3 &= \frac{\rmi{1}^{-n_0} (-1 + \rp{1}) \omega}{(\rp{1} - \rmi{1}) \z{1}}\\
    c'_4 &= \frac{\rmi{1}^{-n_0} (-1 + \rp{1}) \omega}{\rp{1}\z{1}(1-\rmi{2}) + \rmi{2}\z{2}(-1 + \rp{1})}
\end{align}
\begin{align}
    d'_1 &= \frac{\rmi{1} \rp{1}^{1 - n0}}{\rmi{1} - \rp{1}} + d'_2\\
    d'_2 &= \frac{\rmi{1}^{1 - n_0}
   \rp{1} \left(\rmi{1}\z{1}(-1 + \rmi{2}) + \rmi{2}\z{2}(1 - \rmi{1})\right))}{(\rp{1} - 
    \rmi{1}) \left(\rp{1}\z{1}(-1 + \rmi{2}) + \rmi{2}\z{2}(1 - \rp{1})\right)}\\
    d'_3 &= \frac{\rmi{1}^{1 - n_0} \rp{1}}{\rmi{1} - \rp{1}}\\
    d'_4 &= \frac{\rmi{1}^{1 - n_0} \rp{1} \z{1}}{\rp{1}\z{1}(-1+\rmi{2}) + \rmi{2}\z{2}(1 - \rp{1})}
\end{align}
\printbibliography

@incollection{Kasteleyn67,
    author = {P Kasteleyn},
    title = {Graph theory and crystal physics},
    booktitle = {Graph Theory and Theoretical Physics},
    pages = {43--110},
    year = {1967},
    publisher = {Academic Press, London},
}

@article{TemperleyFisher61,
    author = {HNV Temperley and M Fisher},
    title = {Dimer problem in statistical mechanics--an exact result},
    journal = {Philosophical Magazine},
    volume = {6},
    number = {8},
    pages = {1061--1063},
    year = {1961},
}

@article{Kenyon97,
    author = {R Kenyon},
    title = {Local statistics of lattice dimer},
    journal = {Ann. Inst. H. Poincar\'{e} Prob},
    year = {1997},
    volume = {33},
    pages = {591--618},
}

@article{Kenyon02,
    author = {R Kenyon},
    title = {The Laplacian and Dirac operators on critical planar graphs},
    journal = {Inventiones mathematicae},
    year = {2002},
    pages = {409--439},
    volume = {150}
}

@article{BoutillierEtAl17,
    author = {C Boutillier and J Bouttier and G Chapuy and S Corteel and S Ramassamy},
    title = {Dimers on Rail Yard Graphs},
    journal = {Ann. Inst. Henri Poincar\'{e} Comb. Phys. Interact},
    year = {2017},
    volume = {4},
    pages = {479--539}
}

@article{CohnEtAl96,
    author = {H Cohn and N Elkies and J Propp},
    title = {Local statistics for random domino tilings of the Aztec diamond},
    journal = {Duke Math. J},
    year = {1996}, 
    volume = {85},
    pages = {117--166}
}

@article{Johansson05,
    author = {K Johansson},
    title = {The Arctic circle boundary and the Airy process},
    year = {2005},
    journal = {The Ann. of Prob},
    volume = {33},
    page = {1--30}
}

@article{ChhitaEtAl15,
    author = {S Chhita and K Johansson and B Young},
    title = {Asymptotic domino statistics in the Aztec diamond},
    journal = {Ann. Appl. Probab},
    volume = {25},
    issue = {3},
    pages = {1232--1278},
    year = {2015}
}

@article{ChhitaJohansson16,
    author = {S Chhita and K Johansson},
    title = {Domino statistics of the two-periodic Aztec diamond},
    journal = {Advances in Mathematics},
    volume = {294},
    pages = {37--149},
    year = {2016}
}

@article{Li17,
    author = {Z Li},
    title = {Conformal invariance of dimer heights on isoradial double graphs},
    journal = {Ann. Inst. Henri Poincar\'{e} Comb. Phys. Interact},
    volume = {4},
    pages = {273--307},
    year = {2017}
}

@article{OkounkovReshetikhin03,
    author = {A Okounkov and N Reshetikhin},
    title = {Correlation function of Schur process with application to local geometry of a random 3-dimensional Young diagram},
    journal = {J. Amer. Math. Soc},
    volume = {16},
    pages = {581--603},
    year = {2003}
}

@article{BufetovGorin18,
    author = {A Bufetov and V Gorin},
    title = {Fluctuations of particle systems determined by Schur generating functions},
    journal = {Adv. Math},
    volume = {338},
    pages = {702--781},
    year = {2018}
}

@article{Borodin07,
    author = {A Borodin},
    title = {Schur process and cylindrical partitions},
    journal = {Duke Math. J},
    volume = {140},
    pages = {391-468},
    year = {2007}
}

@article{KenyonEtAl06,
    author = {R Kenyon and A Okounkov and S Sheffield},
    title = {Dimers and Amoebae},
    journal = {Ann. Math},
    volume = {163},
    pages = {1019--1056},
    year = {2006}
}

@article{KenyonOkounkov07,
    author = {R Kenyon and A Okounkov and S Sheffield},
    title = {Limit shapes and the complex Burgers equation},
    journal = {Acta Mathematica},
    volume = {199},
    pages = {263--302},
    year = {2007}
}

@article{Sheffield07,
    author = {S Sheffield},
    title = {Gaussian free field for mathematicians},
    journal = {Prob. Theory Relat. Fields},
    volume = {139},
    pages = {521-541},
    year = {2007}
}

@article{Kenyon01,
    author = {R Kenyon},
    title = {Dominoes and the Gaussian free field},
    journal = {Ann. of Prob},
    volume = {29},
    pages = {1128--1137},
    year = {2001}
}
\end{document}